\newcommand{\Reals}     {{{\mathrm I\!R}}}  
\newcommand{\define}    {\stackrel{\scriptscriptstyle\triangle}{=}}  
\newcommand{\diag}    {{\mathrm{diag}}}  
\newcommand{\tr}      {{\mathrm{tr}}}    
\newcommand{\Zrb}     {{\uwti 0}}      
\newcommand{\uwti}[1]{{\mathbf #1}}
  \newcommand{\Ab}{{\uwti A}}
\newcommand{\bb}{{\uwti b}}  
\newcommand{\cb}{{\uwti c}}  
  \newcommand{\Db}{{\uwti D}}
\newcommand{\eb}{{\uwti e}}
\newcommand{\hb}{{\uwti h}}  \newcommand{\Hb}{{\uwti H}}
  \newcommand{\Ib}{{\uwti I}}
\newcommand{\rb}{{\uwti r}}  \newcommand{\Rb}{{\uwti R}}
\newcommand{\ub}{{\uwti u}}  
\newcommand{\vb}{{\uwti v}}  
  \newcommand{\Wb}{{\uwti W}}
\newcommand{\xb}{{\uwti x}}  
\newcommand{\yb}{{\uwti y}}
          \newcommand{\Sigmab}   {\uwti{\mathnormal\Sigma}}
\newcommand{\Ac} {{\mathcal A}}         
\newcommand{\Bc} {{\mathcal B}}         
\newcommand{\Cc} {{\mathcal C}}         \newcommand{\Ck} {{\bm {\mathcal C}}}
\newcommand{\Fc} {{\mathcal F}}
\newcommand{\Ic} {{\mathcal I}}
\newcommand{\Nc} {{\mathcal N}}         
\newcommand{\Pc} {{\mathcal P}}
\newcommand{\Sc} {{\mathcal S}}
\newcommand{\Wc} {{\mathcal W}}         
\newcommand{\Xc} {{\mathcal X}}
\newcommand{\Pulk} {{\underline{{\bm {\mathcal P}}}}}
\newcommand{\Bulk} {{\underline{{\bm {\mathcal B}}}}}
\newcommand{\Tulk} {{\underline{{\bm {\mathcal T}}}}}
\newcommand{\Iulk} {{\underline{{\bm {\mathcal I}}}}}
\newcommand{\Aulc} {{\underline{\mathcal A}}}
\newcommand{\Bulc} {{\underline{\mathcal B}}}
\newcommand{\Uulc} {{\underline{\mathcal U}}}
\newcommand{\Vulc} {{\underline{\mathcal V}}}
\newcommand{\Eulc} {{\underline{\mathcal E}}}
\newcommand{\Fulc} {{\underline{\mathcal F}}}
\newcommand{\Rulc} {{\underline{\mathcal R}}}
\newcommand{\Sulc} {{\underline{\mathcal S}}}
              \newcommand{\eul}  {{\underline e}}
\newcommand{\Iul}  {{\underline I}}
\newcommand{\ba}{\begin{array}}
\newcommand{\ea}{\end{array}}
\newtheorem{definition}{Definition}
\newtheorem{remark}{Remark}
\newtheorem{theorem}{Theorem}
\newtheorem{lemma}{Lemma}
\newtheorem{proposition}{Proposition}
\newtheorem{condition}{Condition}
\begin{document}

\title{Multi-User MIMO Scheduling in the Fourth Generation Cellular Uplink}

\author{Narayan Prasad ~~~Honghai Zhang ~~~Hao Zhu ~~~
Sampath Rangarajan\thanks{Manuscript received May 23, 2012; revised November 6, 2012 and April
24, 2013; accepted June 12, 2013. The associate editor coordinating the review
of this paper and approving it for publication was N. Sagias.
A part of this paper has been accepted for presentation at the IEEE Asilomar
Conference on Signals, Systems, and Computers 2013. Copyright (c) 2013 IEEE. Personal use of this material is permitted. Permission from IEEE must be obtained for all other users, including reprinting/ republishing this material for advertising or promotional purposes, creating new collective works for resale or redistribution to servers or lists, or reuse of any copyrighted components of this work in other works.}
\thanks{ N. Prasad and S. Rangarajan are with NEC Labs America, 4 Independence
Way, Princeton, NJ, USA (e-mail: {prasad, sampath}@nec-labs.com). N.
Prasad is the corresponding author.
H. Zhang is with Google Inc. (e-mail: honghaiz@gmail.com).
H. Zhu is with the Univeristy of Illinois at Urbana Champaign (e-mail:
zhuh@umn.edu).}}

\date{}
\maketitle
\begin{abstract}
We consider Multi-User MIMO (MU-MIMO) scheduling in  the 3GPP LTE-Advanced (3GPP LTE-A) cellular uplink.   The 3GPP LTE-A  uplink  allows for precoded multi-stream (precoded MIMO) transmission from each scheduled user and also allows flexible multi-user (MU) scheduling wherein multiple users can be assigned the same time-frequency resource. However, exploiting these features is made challenging by   certain practical constraints that have been imposed in order to maintain a low signaling overhead. We show that while the scheduling problem in the 3GPP LTE-A cellular uplink is NP-hard, it can be formulated as the maximization of a submodular set function subject to one matroid and multiple knapsack constraints.    We then propose constant-factor polynomial-time   approximation  algorithms and demonstrate their superior performance via simulations.


\end{abstract}
%
%
\newpage

\section{Introduction}

 The 3GPP LTE-A based cellular network \cite{3gpp} together with the IEEE 802.16m based cellular network are the only two cellular networks classified as 4G cellular networks by the international telecommunications union. Some key attributes that a 4G uplink must possess  are the ability to support a peak spectral efficiency of 15 bps/Hz  and a cell average spectral efficiency of 2 bps/Hz,   ultra-low latency and  bandwidths of up to 100MHz. To achieve these ambitious specifications, the 3GPP LTE-A uplink is based on a modified form of the
orthogonal frequency-division multiplexing based multiple-access (OFDMA)  \cite{3gpp}.
 In addition,  it allows precoded multi-stream (precoded MIMO) transmission from each scheduled user as well as  flexible multi-user   scheduling. Notice that while OFDMA itself allows for significant spectral efficiency gains via channel dependent frequency domain scheduling,   multi-user multi-stream communication promises substantially higher degrees of freedom \cite{YuW:degfree}.
Our focus in this paper is on the 3GPP LTE-A uplink (UL) and in particular on MU MIMO scheduling for the  LTE-A UL. Predominantly, almost all of the 4G cellular systems that will be deployed will be based on the 3GPP LTE-A standard \cite{3gpp}.  This standard is an  enhancement of the basic LTE standard which is referred to in the industry as Release 8  and indeed deployments conforming to Release 8 are already underway. The scheduling  in the LTE-A UL is done in the frequency domain where in each scheduling interval the scheduler assigns one or more resource blocks (RBs)   to  each scheduled user. Each RB contains a pre-defined set of consecutive subcarriers and consecutive OFDM symbols and is the minimum allocation unit.

The goal of this work is to design practical uplink MU-MIMO resource
allocation algorithms for the  LTE-A cellular network, where the term resource
refers to RBs as well as precoding matrices. In particular, we consider the design of resource allocation algorithms via weighted sum rate utility maximization that account for   finite user queues (buffers) and finite precoding codebooks. In addition, the designed algorithms comply with all the main practical constraints on the assignment of RBs  and precoders to the scheduled users. We first capture all the key definitions used in this paper in Appendix \ref{app:defns}. Then, we list
our main contributions  in the following:

{\textbf 1)} We first assume that users can employ ideal Gaussian codes and that the base-station (BS) can employ an optimal receiver. We then enforce user rates to lie in a fundamental achievable rate region of the multiple access channel which is a polymatroid and show that the resulting resource allocation problem is NP-hard. We prove that the resource allocation problem can however be formulated as the maximization of a monotonic sub-modular set function subject to one matroid and multiple knapsack constraints, and can be solved using a  recently discovered polynomial time randomized constant-factor approximation  algorithm   \cite{bansal:sparse}. We also adapt a  simpler  deterministic greedy algorithm and show that it  yields a constant-factor approximation  for  scenarios of interest.

        {\textbf 2)} We then consider  scenarios where users employ codes constructed over finite alphabets. In this case the mutual information terms needed to specify  an achievable rate region   do not have closed form expressions. On the other hand the achievable rate region obtained for Gaussian alphabets can be a loose outer bound. Consequently, we obtain a tighter outer bound which is also a polymatroid.  As a result all algorithms developed for Gaussian alphabets can be reused after simple modifications. Finally, we
demonstrate the superior performance of our proposed algorithms via simulations using a realistic channel model.
%

An interesting corollary that follows  from our results is that a popular transmit antenna selection problem in point-to-point MIMO communications  can be posed as a sub-modular maximization problem that is NP-hard but can be approximately solved (with  at-least half optimality) by a simple greedy algorithm.

 \subsection{Related Work}
Resource allocation over  OFDMA networks has been widely studied
 \cite{YuW:dual,OFDMA:Game:Globecom:08,Lee:DLinfo} with  a large fraction of the   problems that have so far been  considered  being single-user (SU) scheduling problems, which attempt to maximize a system utility under  the constraint that scheduled users can only be assigned  non-overlapping  subcarriers. These problems  have been formulated as {\em continuous optimization problems}, and since they are in general non-linear and non-convex,
 many approaches including those
based on  game theory  \cite{OFDMA:Game:Globecom:08} and dual decomposition \cite{YuW:dual}  have been developed.
MU scheduling in the uplink has been considered in \cite{SDMAprasad} which investigates the tradeoff between fairness and efficiency, and from an information theoretic perspective in \cite{capacityPalomar}. In particular, \cite{capacityPalomar} derives a formulation of the capacity region of a discrete memoryless multiple access  channel (MAC) involving only one non-convex constraint and then proposes methods to   compute inner and outer bounds. A MIMO MAC with finite rate feedback is considered in \cite{beamDai}  and a joint user selection, beamforming and quantization strategy is proposed and comprehensively analyzed. 

Recent works have focused on emerging cellular standards and have formulated the respective resource allocation problems as constrained integer programs. A prominent example
 is \cite{Lee:DLinfo} which consider the design of downlink SU-MIMO schedulers for LTE  systems. In this context, we note that downlink frequency domain scheduling in LTE systems using quantized channel quality feedback has been analyzed in  \cite{Donthi:downlink}. On the other hand,  corresponding resource allocation problems for the cellular uplink have been examined in  \cite{multiserver:2009,Yang:ULinfo,prasad:globe11,Lee-UL-2009}. In particular, \cite{multiserver:2009,Yang:ULinfo,Lee-UL-2009} show  that the single-user UL LTE (Release 8) scheduling problem is NP-hard and provide  constant-factor approximation algorithms, whereas \cite{prasad:globe11} considers SU-MIMO LTE-A scheduling. 
The algorithms in \cite{multiserver:2009,Yang:ULinfo,prasad:globe11,Lee-UL-2009}   cannot incorporate MU scheduling  and also cannot incorporate knapsack constraints. MU scheduling for the LTE (Release 8) UL is considered in detail in \cite{prasad:wiopt12}. However, we emphasize that certain additional constraints imposed on LTE (Release 8) UL MU scheduling essentially ensure  that   algorithms optimized for LTE MU-scheduling are unsuitable for LTE-A MU-scheduling whereas algorithms optimized for LTE-A MU-scheduling (as presented in this paper) are not applicable to LTE MU-scheduling. 

\section{MU-MIMO Scheduling in the  LTE-A UL}\label{sec:musetup}
Consider a single-cell uplink with $K$ users and one base-station (BS) which is assumed to have $N_r\geq 1$ receive antennas. Suppose that user $k$ has $N_t\geq 1$ transmit antennas and its power budget is $P_k$. We let  $N$ denote the total number of available resource blocks (RBs). For convenience and without loss of generality, in the following analysis we assume each RB to have unit size.  Then, let $\Hb_{k}^{(n)}$ denote the $N_r\times N_t$ channel matrix seen by the BS from user $k$ on RB $n$, which we assume is known perfectly to the BS. 
 We let $\eul=(u,\cb,\Wb)$ denote a 3-tuple, where $1\leq u\leq K$ denotes a user, $\Wb\in\Wc$ (such that $\tr(\Wb^{\dag}\Wb)=1$) denotes a precoder from a finite codebook $\Wc$ and $\cb\in\Ck$ denotes a valid assignment of RBs chosen from the set $\Ck$ containing all possible valid assignments. In particular, each $\cb$ is an $N-$length vector with binary-valued ($\{0,1\}$) entries and we say an RB $i$ belongs to $\cb$ ($i\in\cb$) if $\cb$ contains a one in its $i^{th}$ position, i.e., $c(i)=1$. Next, we let  $\Eulc=\{\eul=(u,\cb,\Wb):1\leq u\leq K,\cb\in\Ck,\Wb\in\Wc\}$ denote the ground set of all possible such 3-tuples. For any such 3-tuple we adopt the convention that
\begin{eqnarray}
\nonumber \eul=(u,\cb,\Wb)\Rightarrow \cb_{\eul}=\cb,\;\;\Wb_{\eul}=\Wb,\;\;u_{\eul}=u,\;\Hb_{\eul}^{(n)}=\Hb^{(n)}_u\;\forall\;n.
\end{eqnarray}
Suppose now that a subset $\Aulc\subseteq\Eulc$ is selected or scheduled by the base-station. Then on each RB $n$  the received signal vector at the BS can be modeled as the output of a MIMO multiple access channel, as
\begin{equation}\label{eq:propfinmodOr}
\yb^{(n)}=\sum_{\eul\in\Aulc}c_{\eul}(n)\Hb^{(n)}_{\eul}\Wb^{(n)}_{\eul}\xb^{(n)}_{\eul} + \vb^{(n)},
\end{equation}
where
 $\vb^{(n)}\sim\Cc\Nc(\Zrb,\Ib)$ is the additive Gaussian noise and $\xb^{(n)}_{\eul}$ is the input vector corresponding to 3-tuple $\eul$, i.e., the input vector transmitted by user $u_{\eul}$ on RB $n$.

We consider the problem of scheduling users in the frequency domain in a given scheduling interval. 
Let $\alpha_k>0,  \;1\leq k\leq K$ denote the positive weight of the $k^{th}$ user which is an input to the scheduling algorithm and is updated using the output of the scheduling algorithm in every scheduling interval, say according to the proportional fairness rule \cite{Liu-Knightly-2003}. Letting $r_k$ denote the rate assigned to the $k^{th}$ user (in bits per N RBs), we consider the following weighted sum rate utility maximization problem,
\begin{eqnarray}\label{eq:original}
 \max\sum_{1\leq k\leq K}\alpha_{k}r_k,\;\;
\end{eqnarray}
where the maximization is over the assignment of RBs and  precoders to the users
 \textbf{subject to:}

  \textbf{Decodability constraint:} The rates assigned to the scheduled users should be decodable by the base-station receiver. Notice that unlike SU scheduling, MU scheduling allows for multiple users to be assigned the same RB. Thus, the rate that can be achieved for user $k$ need not be only a function of the RBs, precoders and powers assigned to the $k^{th}$ user but can also depend on  those assigned to the other users. 

 \textbf{One precoder and one power level per user:} Each scheduled user can be assigned any one precoding matrix from a finite codebook of such matrices  $\Wc$. In addition, each scheduled user can transmit with only one power level (or power spectral density (PSD))  on all its assigned RBs. This PSD is implicitly determined by the number of RBs assigned to that user, i.e., the user divides its total power equally among all its assigned RBs. The motivation behind these two constraints is that while they significantly decrease the signaling overhead involved in conveying the scheduling decisions to the users, they do not result in significant performance degradation. This is because the uplink channel between each user and the base station is typically highly correlated so that each user's set of preferred spatial directions can be regarded as being approximately frequency non-selective. Consequently, these preferred spatial directions can be reasonably well quantized using a single precoding matrix. Similarly,
    the multi-user diversity effect ensures that each user is scheduled on the set of RBs on which it has relatively good channels. A constant power allocation over such {\em good} channels results in a negligible loss \cite{YuW:Constant}.

  \textbf{At most two chunks per-user:} The set of RBs assigned to each scheduled user should form at-most two mutually non-contiguous chunks, where each chunk is a set of contiguous RBs.  We note here that in the LTE (Release 8) UL each scheduled user is assigned only one  chunk of contiguous RBs \cite{3gpp:rel8}.  Allowing only one chunk of contiguous  RBs to be assigned, together with the DFT spreading operation that each scheduled user must   employ, ensure a low transmit peak-to-average-power ratio (PAPR). In the LTE-A UL each user must employ the DFT spreading but can be assigned up-to two chunks. This relaxation in LTE-A is essentially a compromise between the need to provide more scheduling flexibility and the need to keep PAPR under check \cite{3gpp}.
  A feasible RB allocation and co-scheduling of users in  LTE-A multi-user uplink  is depicted in Fig \ref{fig:lteA}. Notice that each scheduled user is assigned at-most two mutually non-contiguous chunks. Also note that any two scheduled users can partially overlap, i.e., any subset  of the RBs assigned to a user can also be assigned to another user. This is in contrast to the LTE  UL in which any two scheduled users must either not overlap or must completely overlap \cite{3gpp:rel8}.

 \textbf{Finite buffers} We let $Q_k$  denote the size in bits  of the queue (buffer)  associated with the $k^{th}$ user. Thus, the rate $r_k$ assigned to user $k$ cannot exceed $Q_k$.

In addition to the aforementioned constraints the following constraints can also be imposed.

 \textbf{Control channel overhead and interference limit constraints:} Every user that is  scheduled  must be informed about its transmission rate and the set of RBs on which it must transmit, along with the precoder it should employ. This information is sent on the DL control channel of limited capacity which in turn imposes constraints that the set of  scheduled users must respect. These constraints are further discussed in \cite{prasad:wiopt12}. 
 On the other hand, the scheduling decisions that are made should also comply with interference limit constraints which ensure that the  interference caused to other cells does not exceed certain specified margins. 


We will formulate the optimization problem in (\ref{eq:original}) as the {\em maximization of a monotonic submodular set function subject to one matroid and multiple knapsack   constraints}. Towards this end,  we first recall the key definitions from Appendix \ref{app:defns} and then
  enforce that   the non-zero entries in each $\cb\in\Ck$ form at-most two non-contiguous chunks.
In addition, for each 3-tuple $\eul=(u,\cb,\Wb)\in\Eulc$ we let $p_{\eul}$ denote the associated power level (PSD).  This PSD can be computed as $\frac{P_{u}}{{\rm size(\cb)}}$, where ${\rm size(\cb)}$ denotes the number of ones (number of RBs) in $\cb$. Further, let  $\alpha_{\eul},Q_{\eul}$ denote the weight  and buffer (queue) size associated with the 3-tuple $\eul$, respectively and let $r_{\eul}$ denote the rate associated with the 3-tuple $\eul$. We will use the phrase {\em selecting a 3-tuple $\eul$ to imply that
the user $u_{\eul}$ is scheduled to transmit on the RBs indicated in $\cb_{\eul}$ with PSD $p_{\eul}$ and
 precoder $\Wb_{\eul}$}. Thus,  {\em the constraints of one precoder and one power level per user
 along with at most two chunks per-user can be imposed by allowing the scheduler to select any subset of 3-tuples $\Uulc\subseteq\Eulc$ such that $\sum_{\eul\in\Uulc}1\{u_{\eul}=u\}\leq 1$ for each $u\in\{1,\cdots,K\}$, where $1\{.\}$ denotes the indicator function.}
  Accordingly, we define   a family of subsets of $\Eulc$, denoted by $\Iulk$, as
  \begin{eqnarray}\label{eq:Indfamily}
 \Iulk=\left\{\Uulc\subseteq\Eulc: \sum_{\eul\in\Uulc}1\{u_{\eul}=u\}\leq 1,\;\;\forall\;\;1\leq u\leq K \right\}.
 \end{eqnarray}

We recall the model in (\ref{eq:propfinmodOr}) and next consider the decodability constraint after first assuming that each user can employ ideal Gaussian codes (i.e., codes for which the coded modulated symbols can be regarded as i.i.d. Gaussian) and that the BS can employ an optimal receiver. Subsequently, we will consider finite input alphabets. Recall that in DFT-Spread-OFDMA each user linearly transforms its codeword using a DFT matrix in order to reduce the PAPR. Note, however,  that under the assumption of ideal Gaussian codes  and optimal receiver, the DFT spreading operation performed by each user can be ignored.  
  Accordingly, we define a set function $f:2^{\Eulc}\to\Reals_+$ as
\begin{eqnarray}\label{eq:firstSub}
f(\Uulc)=\sum_{n=1}^N\log\left|\Ib + \sum_{\eul\in\Uulc}p_{\eul}c_{\eul}(n)\Hb_{\eul}^{(n)}\Wb_{\eul}(\Hb_{\eul}^{(n)}\Wb_{\eul})^{\dag}\right|,\;\;
\forall\;\Uulc\subseteq \Eulc.
\end{eqnarray}
It can be verified that $f(.)$ defined in (\ref{eq:firstSub}) is a submodular set function, i.e., it satisfies 
\begin{eqnarray*}
f(\Aulc\cup\{\eul\})-f(\Aulc)\geq f(\Bulc\cup\{\eul\})-f(\Bulc),
\end{eqnarray*}
 for all $\Aulc\subseteq\Bulc\subseteq\Eulc$ and $\eul\in\Eulc\setminus\Bulc$.
Further since it is monotonic (i.e., $f(\Aulc)\leq f(\Bulc),\;\forall\;\Aulc\subseteq \Bulc$) and normalized $f(\phi)=0$, where $\phi$ denotes the empty set, we can assert that $f(.)$  is a rank function. Consequently,
for each $\Uulc\subseteq\Eulc$, the region
  \begin{eqnarray}\label{eq:polreg}
  \Pulk(\Uulc,f)=\left\{\rb=[r_{\eul}]_{\eul\in\Uulc}\in\Reals_+^{|\Uulc|}: \sum_{\eul\in\Aulc}r_{\eul}\leq f(\Aulc),\;\forall\;\Aulc\subseteq\Uulc \right\},
 \end{eqnarray}
is a polymatroid \cite{edmonds:poly}. Note that for   each $\Uulc\subseteq\Eulc$, $\Pulk(\Uulc,f)$  is the fundamental achievable rate region of a  multiple access channel so that each rate-tuple $\rb_{\Uulc}=[r_{\eul}]_{\eul\in\Uulc}\in\Pulk(\Uulc,f)$ is achievable \cite{tse:poly}. 
Thus, {\em we can impose decodability constraints by imposing that the assigned rate-tuple satisfy $\rb_{\Uulc} \in\Pulk(\Uulc,f)$ for any selected subset $\Uulc\subseteq\Eulc$.}

Next, in order to impose buffer (queue) constraints, we define a box
 \begin{eqnarray}
  \Bulk(\Uulc)=\{\rb=[r_{\eul}]_{\eul\in\Uulc}\in\Reals_+^{|\Uulc|}: 0\leq r_{\eul}\leq Q_{\eul},\;\forall\;\eul\in\Uulc \},\;\forall\; \Uulc\subseteq\Eulc.
 \end{eqnarray}
{\em Thus, for a (tentative) choice $\Uulc$, we can satisfy both decodability and buffer constraints by assigning only rate-tuples that lie in the region  $ \Pulk(\Uulc,f)\cap \Bulk(\Uulc)$}.
Clearly among all such rate-tuples we are interested in the one that maximizes the weighted sum rate. Hence, without loss of optimality with respect to (\ref{eq:original}), with each $\Uulc\subseteq\Eulc$ we can associate a rate-tuple in $ \Pulk(\Uulc,f)\cap \Bulk(\Uulc)$ that maximizes the weighted sum rate. Accordingly, we define the following set function that determines the reward obtained upon selecting any subset of  $\Eulc$.
We define the set function $h:2^{\Eulc}\to \Reals_+$ as
 \begin{eqnarray}\label{eq:defnh}
  h(\Uulc) = \max_{\rb=[r_{\eul}]_{\eul\in\Uulc}\atop \rb\in \Pulk(\Uulc,f)\cap \Bulk(\Uulc)}  \left\{\sum_{\eul\in\Uulc}\alpha_{\eul}r_{\eul}\right\},\;\;\forall\;\Uulc\subseteq\Eulc.
 \end{eqnarray}


  Leveraging  the arguments made in  \cite{prasad:wiopt12}, 
  we can  represent the control channel overhead constraints as column-sparse knapsack constraints
 such that a subset $\Uulc$ is feasible if and only if
 \begin{eqnarray}
 \Ab_C\xb_{\Uulc}\leq {\bf \bb},
 \end{eqnarray}
 where $\Ab_C\in \{0,1\}^{L\times |\Eulc|}$ is a binary valued matrix for some integer $L\geq 1$ and $ \bb$ is an $L$ length vector, referred to as the control channel budget vector,  whose entries are positive integers. $\xb_{\Uulc}\in\{0,1\}^{|\Eulc|\times 1}$ is a vector with an entry equal to one in each position corresponding to each 3-tuple $\eul\in\Uulc$ and zero elsewhere.
 Notice that the coefficients in $\Ab_C$ are not normalized and hence $\Ab_C$ and $\bb$ together enforce the control channel overhead constraints. Moreover, the total number of non-zero coefficients in any column of $\Ab_C$ is no more than  an integer $\Delta\geq 1$ which denotes the column sparsity level such that $\Delta<<L$.

 Finally, let us consider the interference limit constraints. Suppose that the cell of interest is surrounded by $M$ adjacent cells (or sectors). Let $\eb_m$ be an $N-$length vector of binary valued entries which conveys the RBs such that the total interference caused to the $m^{th}$ base station over all the RBs indicated in $\eb_m$ should be no greater than a specified upper bound. In particular,
  let $\Rb_{u,m}$ be the (wide-band) correlation matrix of the channel seen at the $m^{th}$ base station from the $u^{th}$ user in the cell of interest.\footnote{We assume that the BS in the cell of interest also knows this correlation matrix by exchanging appropriate messages with BS $m$ on the backhaul.} Then the total interference caused to the $m^{th}$ base station over all the RBs indicated in $\eb_m$, upon selecting 3-tuples in any set $\Uulc\subseteq \Eulc$ is equal to
   \begin{eqnarray}\label{eq:interf}
  \sum_{\eul\in\Uulc}\underbrace{p_{\eul}\tr(\Wb_{\eul}^{\dag}\Rb_{u_{\eul},m}\Wb_{\eul})(\cb_{\eul}^T\eb_m)}_{\beta_{\eul,m}}.
 \end{eqnarray}
  Then, we are allowed to select   any set of 3-tuples $\Uulc\subseteq \Eulc$ such that the resulting total interference imposed on the $m^{th}$ base station over all the RBs indicated in $\eb_m$ is
  no greater than a specified upper bound $\gamma_{(m)}$, i.e., such that $\frac{1}{\gamma_{(m)}}\sum_{\eul\in\Uulc}\beta_{\eul,m}\leq 1,\;\forall\; 1\leq m\leq M$.
{\em Thus, all the interference limit constraints can be represented as $M$ generic knapsack constraints
 given by
 \begin{eqnarray}
 \Ab_I\xb_{\Uulc}\leq {\bf 1}_M,
 \end{eqnarray}
 where $\Ab_I\in [0,1]^{M\times |\Eulc|}$ and $ {\bf 1}_M$ is a $M$ length vector of ones.}

Summarizing the aforementioned results, we have formulated (\ref{eq:original})  as the following optimization problem:
 \begin{eqnarray}\label{eq:original2}
\nonumber  \max_{\Uulc\subseteq\Eulc} \{h(\Uulc)\}\;\; {\rm s.t.}\\
 \nonumber\Uulc\in\Iulk;\\
  \Ab_I\xb_{\Uulc}\leq {\bf 1}_M;\;\;\Ab_C\xb_{\Uulc}\leq \bb.
 \end{eqnarray}

 In (\ref{eq:original2}) we regard $M,\Delta$ as constants that are arbitrarily fixed, whereas $L$ can scale polynomially in the cardinality of the ground set $|\Eulc|$. Then, for a given number of users $K$, number of RBs $N$ and the codebook cardinality $|\Wc|$ (which together fix $|\Eulc|$), an instance
(or input) of the problem in  (\ref{eq:original2}) consists of a set of positive user weights $\{\alpha_u\}$ and queue sizes $\{Q_u\}$,  per-user per-RB channel matrices $\{\Hb_u^{(n)}\}:\;1\leq u\leq K,1\leq n\leq N$, a codebook $\Wc$ (of cardinality $|\Wc|$) along with a column sparse matrix $\Ab_C\in \{0,1\}^{L\times |\Eulc|}$, budget vector $\bb$ and any matrix $\Ab_I\in[0,1]^{M\times |\Eulc|}$.
The output is a subset $\hat{\Uulc}\subseteq \Eulc$ along with a rate-tuple $r_{\hat{\Uulc}}$.
Note that  $|\Eulc|$ is $O(K|\Wc|N^{4})$.

We first introduce the following two results that will be invoked later.
\begin{lemma}\label{lem:Lem1}
The family of subsets $\Iulk$ defined in (\ref{eq:Indfamily}) is an independence  family and   $(\Eulc,\Iulk)$ is a partition matroid.
\end{lemma}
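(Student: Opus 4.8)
The strategy is to recognize $\Iulk$ as the independent-set family of the partition matroid obtained by grouping the ground-set elements by the user they schedule. First I would partition $\Eulc$ into the $K$ blocks $\Eulc_u := \{\eul \in \Eulc : u_{\eul} = u\}$, $u = 1, \ldots, K$. Since every $\eul = (u,\cb,\Wb) \in \Eulc$ carries exactly one user index $u \in \{1,\dots,K\}$, these blocks are pairwise disjoint with $\bigcup_u \Eulc_u = \Eulc$, so $\{\Eulc_u\}_{u=1}^K$ is a genuine partition of $\Eulc$. Under this notation the defining inequality in (\ref{eq:Indfamily}) is simply $|\Uulc \cap \Eulc_u| \le 1$ for every $u$; that is, $\Iulk$ is exactly the family of subsets of $\Eulc$ meeting each block in at most one element.

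Next I would check the two defining properties of an independence family. Nonemptiness: the empty set gives $\sum_{\eul\in\phi} 1\{u_{\eul}=u\} = 0 \le 1$ for all $u$, so $\phi\in\Iulk$. Downward closure: if $\Aulc \subseteq \Bulc$ and $\Bulc \in \Iulk$, then for each $u$, $\sum_{\eul\in\Aulc} 1\{u_{\eul}=u\} \le \sum_{\eul\in\Bulc} 1\{u_{\eul}=u\} \le 1$, hence $\Aulc \in \Iulk$.

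Finally I would verify the matroid augmentation axiom. Given $\Aulc,\Bulc\in\Iulk$ with $|\Aulc| < |\Bulc|$, write $|\Aulc| = \sum_u |\Aulc\cap\Eulc_u|$ and $|\Bulc| = \sum_u |\Bulc\cap\Eulc_u|$ with all summands in $\{0,1\}$; if every block hit by $\Bulc$ were also hit by $\Aulc$ we would get $|\Aulc|\ge|\Bulc|$, a contradiction. So some block $\Eulc_{u_0}$ has $|\Bulc\cap\Eulc_{u_0}|=1$ and $|\Aulc\cap\Eulc_{u_0}|=0$; the unique $\eul\in\Bulc\cap\Eulc_{u_0}$ then satisfies $\eul\notin\Aulc$ and $\Aulc\cup\{\eul\}$ still meets every block at most once, so $\Aulc\cup\{\eul\}\in\Iulk$. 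This establishes that $(\Eulc,\Iulk)$ is a matroid; by construction it is precisely the partition matroid induced by $\{\Eulc_u\}_{u=1}^K$ with unit capacities. I do not expect a real obstacle here: this is the textbook description of a partition matroid, and the only point deserving a line of care is that the user index yields an honest partition of $\Eulc$ (disjointness and covering), which is immediate from the definition of the ground set.
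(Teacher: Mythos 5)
Your proposal is correct and follows essentially the same route as the paper: both recast $\Iulk$ as the family of subsets meeting each user-indexed block $\Eulc_u$ in at most one element, check downward closure, and verify the exchange axiom to conclude $(\Eulc,\Iulk)$ is a partition matroid. The only difference is that you spell out the augmentation step (via the counting argument over blocks), which the paper leaves as ``can be verified.''
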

\begin{proof} Let $\Eulc_{(k)}$ denote the set of all $\eul\in\Eulc:u_{\eul}=k$ and notice that $\Eulc_{(k)}\cap\Eulc_{(j)}=\phi,\;\forall\;k\neq j$. Then, note that $\Iulk$ can also be defined as $\Aulc\in\Iulk\Leftrightarrow |\Aulc\cap\Eulc_{(k)}|\leq 1\;\forall\;1\leq k\leq K$, which is the definition of a partition matroid (cf. Appendix \ref{app:defns}).
\end{proof}

The proof  of the  following lemma  follows from basic definitions \cite{edmonds:poly} and is skipped for brevity.
 \begin{lemma}\label{lem:Lem2}
 The region $\Pulk(\Uulc,f)\cap \Bulk(\Uulc),\;\forall\;\Uulc\subseteq\Eulc$ is a polymatroid characterized by the rank function $f':2^{\Eulc}\to\Reals_+$ where
 \begin{eqnarray}\label{eq:deffp}
  f'(\Uulc)=\min_{\Rulc\subseteq\Uulc}\left\{f(\Uulc\setminus\Rulc)+\sum_{\eul\in\Rulc}Q_{\eul}\right\},\;\forall\;\Uulc\subseteq\Eulc.
 \end{eqnarray}
 \end{lemma}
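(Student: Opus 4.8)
The plan is to prove Lemma~\ref{lem:Lem2} by first rewriting the region $\Pulk(\Uulc,f)\cap\Bulk(\Uulc)$ in polymatroid form and then checking that the function $f'$ of~(\ref{eq:deffp}) is a legitimate rank function. Concretely, I would establish the set identity
\begin{eqnarray*}
\Pulk(\Uulc,f)\cap\Bulk(\Uulc)=\left\{\rb=[r_{\eul}]_{\eul\in\Uulc}\in\Reals_+^{|\Uulc|}:\ \sum_{\eul\in\Aulc}r_{\eul}\le f'(\Aulc)\ \ \forall\,\Aulc\subseteq\Uulc\right\}.
\end{eqnarray*}
For ``$\subseteq$'', take $\rb$ in the left-hand region; for every $\Aulc\subseteq\Uulc$ and every $\Rulc\subseteq\Aulc$, split $\sum_{\eul\in\Aulc}r_{\eul}=\sum_{\eul\in\Aulc\setminus\Rulc}r_{\eul}+\sum_{\eul\in\Rulc}r_{\eul}\le f(\Aulc\setminus\Rulc)+\sum_{\eul\in\Rulc}Q_{\eul}$, where the first bound uses $\rb\in\Pulk(\Uulc,f)$ and the second uses $\rb\in\Bulk(\Uulc)$; minimizing over $\Rulc$ gives $\sum_{\eul\in\Aulc}r_{\eul}\le f'(\Aulc)$. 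For ``$\supseteq$'', choosing $\Rulc=\phi$ in~(\ref{eq:deffp}) shows $f'(\Aulc)\le f(\Aulc)$ for all $\Aulc$, which recovers the constraints~(\ref{eq:polreg}) of $\Pulk(\Uulc,f)$, while choosing $\Aulc=\{\eul\}$ gives $f'(\{\eul\})=\min\{f(\{\eul\}),Q_{\eul}\}\le Q_{\eul}$, which recovers the buffer constraints defining $\Bulk(\Uulc)$.

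It then remains to verify that $f'$ is normalized, monotone nondecreasing, and submodular; once this is done, standard polymatroid theory~\cite{edmonds:poly,tse:poly} implies that the right-hand region above is a polymatroid with rank function $f'$ (in particular $f'(\Aulc)=\max\{\sum_{\eul\in\Aulc}r_{\eul}:\rb\in\Pulk(\Uulc,f)\cap\Bulk(\Uulc)\}$), which is exactly the claim. Normalization $f'(\phi)=0$ is immediate from $f(\phi)=0$ and $Q_{\eul}\ge0$. For monotonicity with $\Aulc\subseteq\Bulc$, take the minimizer $\Rulc^\star\subseteq\Bulc$ of $f'(\Bulc)$ and use $\Rulc^\star\cap\Aulc$ as a feasible choice in~(\ref{eq:deffp}) for $f'(\Aulc)$; since $\Aulc\setminus\Rulc^\star\subseteq\Bulc\setminus\Rulc^\star$ and $f$ is nondecreasing, and since $\sum_{\eul\in\Rulc^\star\cap\Aulc}Q_{\eul}\le\sum_{\eul\in\Rulc^\star}Q_{\eul}$, this yields $f'(\Aulc)\le f'(\Bulc)$.

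Submodularity is the only step that needs care, and the key is to argue with the complementary ``kept'' sets rather than with the removed sets. Let $\Sulc_A\subseteq\Aulc$ and $\Sulc_B\subseteq\Bulc$ attain the minima, so that $f'(\Aulc)=f(\Sulc_A)+\sum_{\eul\in\Aulc\setminus\Sulc_A}Q_{\eul}$ and $f'(\Bulc)=f(\Sulc_B)+\sum_{\eul\in\Bulc\setminus\Sulc_B}Q_{\eul}$, with removed sets $\Rulc_A=\Aulc\setminus\Sulc_A$ and $\Rulc_B=\Bulc\setminus\Sulc_B$. Feeding the feasible choices $\Sulc_A\cup\Sulc_B\subseteq\Aulc\cup\Bulc$ and $\Sulc_A\cap\Sulc_B\subseteq\Aulc\cap\Bulc$ into~(\ref{eq:deffp}) for $f'(\Aulc\cup\Bulc)$ and $f'(\Aulc\cap\Bulc)$ gives
\begin{eqnarray*}
f'(\Aulc\cup\Bulc)+f'(\Aulc\cap\Bulc) &\le& f(\Sulc_A\cup\Sulc_B)+f(\Sulc_A\cap\Sulc_B)\\
&&{}+\sum_{\eul\in(\Aulc\cup\Bulc)\setminus(\Sulc_A\cup\Sulc_B)}Q_{\eul}+\sum_{\eul\in(\Aulc\cap\Bulc)\setminus(\Sulc_A\cap\Sulc_B)}Q_{\eul}.
\end{eqnarray*}
Submodularity of $f$ (the property noted just after~(\ref{eq:firstSub})) bounds $f(\Sulc_A\cup\Sulc_B)+f(\Sulc_A\cap\Sulc_B)\le f(\Sulc_A)+f(\Sulc_B)$, and a short elementwise check --- splitting each index $\eul$ according to whether it lies in $\Aulc\setminus\Bulc$, $\Bulc\setminus\Aulc$, or $\Aulc\cap\Bulc$, and using $\Sulc_A\subseteq\Aulc$ and $\Sulc_B\subseteq\Bulc$ --- shows that the two remaining $Q$-sums add up to \emph{exactly} $\sum_{\eul\in\Aulc\setminus\Sulc_A}Q_{\eul}+\sum_{\eul\in\Bulc\setminus\Sulc_B}Q_{\eul}$. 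Combining, $f'(\Aulc\cup\Bulc)+f'(\Aulc\cap\Bulc)\le f'(\Aulc)+f'(\Bulc)$, as required.

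I expect this submodularity step to be the only real obstacle: the naive attempt to combine the removed sets directly as $\Rulc_A\cup\Rulc_B$ and $\Rulc_A\cap\Rulc_B$ does not mesh with the submodular inequality for $f$, whereas passing to the complementary kept sets is what makes both the $f$-part (via submodularity of $f$) and the $Q$-part (which in fact balances with equality) go through. Everything else is bookkeeping with the definitions~(\ref{eq:polreg}) and~(\ref{eq:deffp}), together with the standard equivalence between polymatroid rank functions and polymatroids.
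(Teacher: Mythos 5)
Your proof is correct and complete. The paper itself skips the proof of Lemma~\ref{lem:Lem2}, citing \cite{edmonds:poly} and ``basic definitions,'' so there is no in-paper argument to compare against; what you have written is the standard derivation of the rank function of a polymatroid truncated by box constraints, carried out in full. The two places that genuinely need checking both go through: the set identity (your two directions are fine, with $\Rulc=\phi$ and $\Aulc=\{\eul\}$ recovering the polymatroid and buffer constraints respectively), and the submodularity of $f'$, where passing to the kept sets $\Sulc_A\cup\Sulc_B$ and $\Sulc_A\cap\Sulc_B$ rather than the removed sets is exactly the right move --- the elementwise identity $\max(a,b)+\min(a,b)=a+b$ applied to the indicator functions of membership confirms your claim that the two residual $Q$-sums add up to exactly $\sum_{\eul\in\Aulc\setminus\Sulc_A}Q_{\eul}+\sum_{\eul\in\Bulc\setminus\Sulc_B}Q_{\eul}$, after which submodularity of $f$ finishes the argument.
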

We are now ready to offer our main result. Let us assume that computing $h(\Uulc)$ for any $\Uulc\subseteq\Eulc$ incurs a unit cost (or equivalently is given by an oracle in a single query). We will show that even under this assumption the problem in (\ref{eq:original2}) is NP hard. Before proceeding it is useful to recall the definitions given in Appendix \ref{app:defns}.
 \begin{theorem}\label{thm:lteA}
 The optimization problem in (\ref{eq:original2}) is NP hard and is the maximization of a monotonic sub-modular set function subject to one matroid and multiple knapsack constraints.
  \end{theorem}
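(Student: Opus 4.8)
The plan is to prove the two assertions in turn: that (\ref{eq:original2}) has the form ``maximize a monotone submodular function subject to one matroid and several knapsack constraints,'' and that the resulting problem is NP-hard even when $h(\cdot)$ is supplied by an oracle.

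\textbf{Structural part.} Lemma~\ref{lem:Lem1} already supplies the single matroid constraint, and the two families $\Ab_I\xb_{\Uulc}\leq{\bf 1}_M$ and $\Ab_C\xb_{\Uulc}\leq{\bf 1}_L$ are packing (knapsack) constraints, since $\Ab_I,\Ab_C$ are non-negative and the right-hand sides are all-ones (see \cite{prasad:ulmulte}); so the only nontrivial point is that $h$ in (\ref{eq:defnh}) is monotone, submodular and normalized. For this I would invoke the greedy characterization of a non-negatively weighted linear optimum over a polymatroid \cite{edmonds:poly}. By Lemma~\ref{lem:Lem2}, $\Pulk(\Uulc,f)\cap\Bulk(\Uulc)$ is the polymatroid of the monotone, submodular, normalized rank function $f'$ of (\ref{eq:deffp}); ordering $\Uulc=\{\eul_1,\dots,\eul_m\}$ so that $\alpha_{\eul_1}\geq\cdots\geq\alpha_{\eul_m}\geq 0$, the optimal rate-tuple in (\ref{eq:defnh}) is $r_{\eul_j}^{\star}=f'(\{\eul_1,\dots,\eul_j\})-f'(\{\eul_1,\dots,\eul_{j-1}\})$, and Abel summation together with $f'(\phi)=0$ yields the layer-cake identity
\[h(\Uulc)=\int_0^{\infty} f'\bigl(\{\eul\in\Uulc:\alpha_{\eul}\geq t\}\bigr)\,dt.\]
Now for each fixed $t\geq 0$ the set function $\Uulc\mapsto f'\bigl(\Uulc\cap\{\eul\in\Eulc:\alpha_{\eul}\geq t\}\bigr)$ is monotone and submodular, because intersecting with a fixed set commutes with unions and intersections and $f'$ is monotone submodular (the standard fact that a restriction of a submodular function is submodular). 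Integrating these functions against a non-negative measure preserves both properties, so $h$ is monotone and submodular with $h(\phi)=0$. (One may alternatively cite the analogous lemma of \cite{prasad:ulmulte} or \cite{prasad:mclong}.) This establishes the claimed form of (\ref{eq:original2}).

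\textbf{NP-hardness.} I would exhibit $0/1$ KNAPSACK as a special case; this simultaneously shows the hardness is genuine under the oracle assumption, since in the instances built below $h$ is in fact modular and trivially evaluated. Given items $i=1,\dots,n$ with values $v_i>0$, weights $w_i\in(0,C]$ and capacity $C$, take $K=n$ users and $N\geq n$ RBs, and give user $i$ a channel vanishing on every RB other than RB $i$; then user $i$ can obtain rate only on RB $i$ and co-scheduled users never interfere. With a one-element codebook, choosing the buffer and weight of user $i$'s useful element $\eul_i$ so that selecting $\eul_i$ alone earns reward $v_i$, and setting $M=1$, $\Delta=1$, $L=0$ with the single row of $\Ab_I$ equal to $w_i/C$ in the column of $\eul_i$ and $0$ elsewhere, the function $h$ becomes additive over any selected subset, so (\ref{eq:original2}) reduces to $\max\{\sum_i v_i x_i:\sum_i (w_i/C)x_i\leq 1,\ x\in\{0,1\}^n\}$, i.e.\ exactly $0/1$ KNAPSACK (the matroid constraint is then slack). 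As KNAPSACK is NP-hard, so is (\ref{eq:original2}). If one wishes to avoid using the knapsack constraints, one could instead reduce from the single-user DFT-Spread-OFDMA scheduling problems proved NP-hard in \cite{multiserver:2009,Yang:ULinfo}, at the cost of a longer argument.

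\textbf{Expected obstacle.} The matroid/knapsack bookkeeping and the reduction arithmetic are routine; the one step that truly needs an idea is the submodularity of $h$, where the greedy/layer-cake identity above is what makes the argument go through, and in the write-up I would take care to use all three of the properties of $f'$ guaranteed by Lemma~\ref{lem:Lem2}.
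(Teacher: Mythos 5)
Your structural argument is essentially the paper's own proof in different notation: the paper also starts from the corner-point characterization of the weighted-sum optimum over the polymatroid of Lemma~\ref{lem:Lem2}, Abel-sums it, and then performs exactly the re-indexing you package as the layer-cake integral --- writing $h(\Uulc)$ as a non-negative combination of the functions $\Uulc\mapsto f'(\Sulc_k\cap\Uulc)$ where the threshold sets $\Sulc_k=\{o(\Eulc,1),\dots,o(\Eulc,k)\}$ are fixed subsets of the ground set, independent of $\Uulc$ (equation~(\ref{eq:newexpH2})). That independence is the one idea that matters, and you have it; the integral versus finite-sum presentation is cosmetic. This part is correct.

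The NP-hardness part is a genuinely different route, and as written it has a gap. The paper restricts to $N=1$, infinite queues, single-antenna users and a single cardinality constraint, so that (\ref{eq:original2}) collapses to maximizing $\log|\Ib+\Hb\Db\Hb^{\dag}|$ over cardinality-$C$ subsets, i.e.\ the maximum-determinant principal-submatrix problem, whose hardness follows from CLIQUE via \cite{ko:algo}; the hardness there lives in the \emph{submodular objective under a matroid constraint}, which is what makes Remark~1 (antenna selection) possible and gives strong NP-hardness. Your reduction instead puts all the hardness into a single knapsack row with a modular objective. The concrete flaw: user $i$ does not own a single element but one element $(i,\cb,1)$ for every admissible $\cb$, and an element with $\cb\supsetneq\{i\}$ still earns strictly positive reward $\alpha_i\min\{\log(1+P_i|h_i|^2/\mathrm{size}(\cb)),Q_i\}$ while, under your assignment of ``$w_i/C$ in the column of $\eul_i$ and $0$ elsewhere,'' it costs nothing in the knapsack. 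The optimizer can therefore take such a degraded element for every user outside the knapsack-optimal set, and the optimum of your instance is $\sum_i v_i^{\mathrm{deg}}+\mathrm{KNAP}(v-v^{\mathrm{deg}},w,C)$ rather than $\mathrm{KNAP}(v,w,C)$; the stated equivalence fails. The repair is easy --- place the coefficient $w_i/C$ in \emph{every} column belonging to user $i$ --- after which the reduction is valid, though it only establishes weak NP-hardness (0/1 KNAPSACK is pseudo-polynomially solvable) and would not support the paper's antenna-selection corollary.
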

\proof Proved in Appendix \ref{app:lteA}. \endproof

\begin{theorem}
There is a randomized algorithm whose complexity scales polynomially in $|\Eulc|$ and which yields a  $\frac{e-1}{e^2(M+\Delta+1) + o(M)}$ approximation to (\ref{eq:original2}).
\end{theorem}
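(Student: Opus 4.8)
The plan is to verify that the instance produced by Theorem~\ref{thm:lteA} meets the hypotheses of the randomized submodular‑maximization framework of \cite{chandra:submod} together with the column‑sparse rounding of \cite{bansal:sparse}, and then to substitute the parameters ($M$ generic knapsacks, $L=\mathrm{poly}(|\Eulc|)$ column‑sparse knapsacks of sparsity $\Delta$, and one matroid) into the resulting guarantee. By Theorem~\ref{thm:lteA} the objective $h(\cdot)$ in (\ref{eq:original2}) is monotone, normalized and submodular; by Lemma~\ref{lem:Lem1} the constraint $\Uulc\in\Iulk$ is a partition‑matroid constraint; and the remaining constraints split into the $M$ generic knapsack constraints $\Ab_I\xb_{\Uulc}\le{\bf 1}_M$ and the $L$ binary knapsack constraints $\Ab_C\xb_{\Uulc}\le{\bf 1}_L$ whose matrix has column sparsity at most $\Delta$. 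Since $h(\Uulc)$ is available through an oracle by assumption, and since the matroid independence polytope and the $L+M$ knapsack polytopes are each separable in time polynomial in $|\Eulc|$ (recall $M,\Delta$ are fixed constants while $L$ may grow polynomially), the relaxation underlying the algorithm is solvable efficiently.

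\emph{Step 1: solve the multilinear relaxation.} Let $F$ denote the multilinear extension of $h$ and let $P$ be the intersection of the matroid independence polytope with $\{\xb\ge 0:\Ab_I\xb\le{\bf 1}_M\}$ and $\{\xb\ge 0:\Ab_C\xb\le{\bf 1}_L\}$. Running the continuous‑greedy procedure over $P$, with $F$ estimated to sufficient accuracy by random sampling using the oracle for $h$, returns in polynomial time a fractional point $\xb^{\star}\in P$ with $F(\xb^{\star})\ge(1-1/e)\,\mathrm{OPT}=\frac{e-1}{e}\,\mathrm{OPT}$, where $\mathrm{OPT}$ is the optimum of (\ref{eq:original2}); monotonicity of $h$ is what allows relaxing the ``$\le {\bf 1}$'' requirement to the down‑closed polytope without loss, exactly as in \cite{chandra:submod}.

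\emph{Step 2: round $\xb^{\star}$.} The rounding must simultaneously (i) maintain independence in the partition matroid, (ii) respect the $M$ generic knapsacks, and (iii) respect the $L$ $\Delta$‑column‑sparse knapsacks, while retaining a bounded fraction of $F(\xb^{\star})$. For (iii) I would invoke the randomized‑rounding‑with‑alteration analysis of \cite{bansal:sparse}: since every element lies in at most $\Delta$ of the $L$ constraints, scaling $\xb^{\star}$ down by a $\Theta(1/(\Delta+1))$ factor and then deleting, for each violated constraint, a bounded set of offending elements yields a point feasible for $\Ab_C$ while keeping an $\Omega(1/(\Delta+1))$ fraction of the reward. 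For (ii), because $M$ is constant one can enumerate the $O(1)$ ``large'' elements of an optimal solution per generic knapsack and handle the residual ``small'' elements by scaling plus a Chernoff/union‑bound argument; the threshold between large and small elements and the enumeration overhead are what introduce the additive $o(M)$ correction. For (i) these are composed with a swap‑rounding/contention‑resolution step for the matroid, which can be kept exact. Tracking the losses, the composed rounding returns a feasible set for (\ref{eq:original2}) whose expected reward is at least $\frac{1}{e(M+\Delta+1)+o(M)}\,F(\xb^{\star})$; combined with Step~1 this gives expected reward at least $\frac{e-1}{e^{2}(M+\Delta+1)+o(M)}\,\mathrm{OPT}$, and the same bound is achieved with high probability by repetition. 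The full procedure runs in time polynomial in $|\Eulc|$.

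\emph{Main obstacle.} The delicate part is Step~2: making requirements (i)–(iii) coexist with a clean aggregate loss of the form $e(M+\Delta+1)+o(M)$ rather than a multiplicative blow‑up. One must keep the matroid rounding exact; bound the number of elements deleted per violated column‑sparse constraint and argue the deletions do not cascade across the $\Delta$ constraints incident to an element; and arrange the large‑item handling for the $M$ generic knapsacks so that its cost is sub‑linear in $M$. A secondary, more routine point is confirming that $F$ can be estimated accurately enough for continuous greedy using only the value oracle for the inner maximization defining $h$, with the estimation error absorbed into the $o(M)$/$\epsilon$ terms.
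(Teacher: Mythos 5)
Your route diverges from the paper's at the very first decision, and the divergence is where the gap lies. The paper's proof rests on one simple observation that you do not make: the partition matroid constraint $\Uulc\in\Iulk$ is itself just $K$ binary packing constraints (one row per user, with a one in each column whose element $\eul$ has $u_{\eul}=k$), and each column of this block contains exactly one nonzero entry. Stacking these $K$ rows with the $L$ rows of $\Ab_C$ (column sparsity $\Delta$) and the $M$ rows of $\Ab_I$ produces a single packing system in which every column has at most $M+\Delta+1$ nonzero entries. The problem is therefore exactly ``monotone submodular maximization subject to a $k$-column-sparse packing integer program'' with $k=M+\Delta+1$, and the claimed factor $\frac{e-1}{e^2(M+\Delta+1)+o(M)}$ is obtained by plugging $k=M+\Delta+1$ into the guarantee of \cite{bansal:sparse} --- no composition of separate rounding schemes is needed, and the additive form of the denominator is inherited directly from that result.

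By contrast, you keep the matroid as a matroid and propose to compose three distinct rounding mechanisms (swap rounding or a contention resolution scheme for the matroid, randomized rounding with alteration for $\Ab_C$, and enumeration plus scaling for $\Ab_I$). This is a legitimate framework in the spirit of \cite{chandra:submod}, but as you yourself flag in your ``main obstacle'' paragraph, the step in which the three losses combine into $e(M+\Delta+1)+o(M)$ is asserted rather than derived, and it is not a routine bookkeeping exercise: composing contention resolution schemes for independent constraint families generically multiplies the per-family survival probabilities, which would yield a denominator of the form $c_1(\Delta)\cdot c_2(M)$ rather than the additive $M+\Delta+1$ that the theorem claims. So the specific constant in the statement does not follow from the argument you outline. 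To close the gap you would either need to carry out the composed analysis and show it really does produce an additive bound (nontrivial, and not what the cited references give you off the shelf), or --- far more simply --- notice that the matroid here is a capacity-one partition matroid, fold it into the packing matrix as column-sparsity-one rows, and invoke \cite{bansal:sparse} once on the aggregate $(M+\Delta+1)$-column-sparse system, which is what the paper does.
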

\begin{proof} The key observation is that the partition matroid constraint in (\ref{eq:original2}) can be expressed as $K$ knapsack constraints (one for each user). Let $\Ab_P$ denote the resulting $K\times |\Eulc|$ matrix determined by these constraints, whose $k^{th}$ row corresponds to the $k^{th}$ user. Note that this row has ones in each position for which the corresponding 3-tuple $\eul$ satisfies $u_{\eul}=k$ and zeros elsewhere.
Together these $K$ knapsack constraints are column-sparse knapsack constraints wherein in each column  a non-zero entry appears only once.  Thus, the total $K+L+M$ knapsack constraints are column-sparse constraints in which each 3-tuple can appear in at-most $M+\Delta+1$ constraints so that each column can have at-most $M+\Delta+1$ non-zero coefficients.
 With this understanding, we can invoke the randomized algorithm from \cite{bansal:sparse} which is applicable to the maximization of any monotonic submodular set function subject to column-sparse knapsack constraints and obtain the guarantee claimed in the theorem.\end{proof}

Notice that since any monotonic submodular set function is also monotonic and sub-additive, we can infer the following result from  Theorem \ref{thm:lteA}.
\begin{lemma}\label{lem:lemH}
The function $h(.)$ defined in (\ref{eq:defnh}) is sub-additive, i.e.,
\begin{eqnarray}
  h(\Uulc)\leq h(\Uulc_1)+ h(\Uulc_2),\;\forall\;\Uulc_1,\Uulc_2,\Uulc: \Uulc_1\cup\Uulc_2=\Uulc.
   \end{eqnarray}
\end{lemma}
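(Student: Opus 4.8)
The plan is to obtain subadditivity of $h(\cdot)$ as a purely formal consequence of the structural properties already established in Theorem \ref{thm:lteA}, without any reference to the LTE-A model itself. Two ingredients are needed: that $h(\cdot)$ is a monotonic submodular set function on $2^{\Eulc}$, and that it is normalized, $h(\phi)=0$. The first is exactly the content of the first half of the proof of Theorem \ref{thm:lteA} (via the representation (\ref{eq:newexpH2}) of $h$ as a nonnegative linear combination of the monotone submodular functions $f'_k$). The second is immediate from the definition (\ref{eq:defnh}): when $\Uulc=\phi$ the maximization is over an empty collection of rate variables and an empty sum, so $h(\phi)=0$.

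First I would rewrite submodularity of $h$ in its equivalent ``union--intersection'' form, $h(\Aulc\cup\Bulc)+h(\Aulc\cap\Bulc)\leq h(\Aulc)+h(\Bulc)$ for all $\Aulc,\Bulc\subseteq\Eulc$, which is the standard reformulation of the decreasing-marginal-returns inequality displayed in the excerpt. Applying this with $\Aulc=\Uulc_1$ and $\Bulc=\Uulc_2$, and using the hypothesis $\Uulc_1\cup\Uulc_2=\Uulc$, gives
\begin{eqnarray*}
 h(\Uulc)\;\leq\; h(\Uulc_1)+h(\Uulc_2)-h(\Uulc_1\cap\Uulc_2).
\end{eqnarray*}
Next I would invoke monotonicity together with normalization: since $\phi\subseteq\Uulc_1\cap\Uulc_2$, monotonicity of $h$ yields $h(\Uulc_1\cap\Uulc_2)\geq h(\phi)=0$, so the last term above is nonnegative and can be discarded, leaving $h(\Uulc)\leq h(\Uulc_1)+h(\Uulc_2)$, which is the claimed inequality.

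There is essentially no real obstacle here; the statement is a textbook-level implication (``normalized monotone submodular $\Rightarrow$ subadditive''). The only point that warrants a sentence of care is verifying the two hypotheses that drive the implication — submodularity and $h(\phi)=0$ — but both are already available from the proof of Theorem \ref{thm:lteA}, so the argument is short. If one wished to be fully self-contained one could also give the equivalence between the marginal-returns form of submodularity and the union--intersection form, but since the excerpt already asserts $h$ is submodular this is not strictly necessary.
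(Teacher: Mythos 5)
Your proposal is correct and matches the paper's own (very brief) justification: the paper simply remarks that any monotonic (normalized, nonnegative) submodular set function is subadditive and infers the lemma from Theorem \ref{thm:lteA}, and your argument via the union--intersection form of submodularity plus monotonicity and $h(\phi)=0$ is exactly the standard implication being invoked, just written out in full.
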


 Practical implementation might demand a  simpler and combinatorial (deterministic) algorithm.  Unfortunately, as remarked in \cite{chandra:submod}, it is difficult to design combinatorial (deterministic) algorithms that can combine both matroid and knapsack constraints. Nevertheless in Algorithm I  we specialize a well known greedy algorithm to our problem of interest (\ref{eq:original2}).
 In this algorithm we maintain a set $\Sulc$. In each iteration of  Algorithm I we add a 3-tuple (from the set of unselected 3-tuples) to $\Sulc$ that yields the largest incremental gain among all feasible 3-tuples that have not yet been selected and where the offered incremental gain is strictly positive. Moreover a 3-tuple is deemed feasible in an iteration if it along with the already selected 3-tuples, satisfies all the constraints in (\ref{eq:original2}). The process continues until either no feasible 3-tuple offers a positive incremental gain or if there are no feasible 3-tuples left.

We now proceed to analyze the performance of Algorithm I and first introduce the following scenario that is of particular interest. We emphasize that this scenario is not required to implement Algorithm I but rather it is introduced since it has a fairly wide applicability and it allows for  a better approximation guarantee. Towards this end, we offer a simple sufficient condition for a knapsack constraint to be matroid constraint.
  \begin{lemma}\label{assump:1}
  The $i^{th}$ knapsack constraint is a matroid constraint if all its strictly positive coefficients are identical,i.e., $1\{A_{i,j}>0\}=1\{A_{i,k}>0\}\Rightarrow A_{i,j}=A_{i,k},\;\forall\;j,k$.
  \end{lemma}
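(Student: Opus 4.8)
The plan is to show that the knapsack constraint $\sum_j A_{i,j} x_j \le 1$, under the stated hypothesis, defines exactly the independent sets of a uniform matroid on the support of row $i$. First I would let $S_i = \{\,\eul \in \Eulc : A_{i,\eul} > 0\,\}$ be the set of elements with strictly positive coefficient in row $i$, and let $a > 0$ denote the common value of these coefficients (the hypothesis says $A_{i,j} = A_{i,k}$ whenever both are strictly positive, so such a common value $a$ exists). For any subset $\Uulc \subseteq \Eulc$, the left-hand side $\sum_{\eul \in \Uulc} A_{i,\eul}$ equals $a\,|\Uulc \cap S_i|$, since elements outside $S_i$ contribute zero (coefficients are in $[0,1]$, hence nonnegative, and the only way a nonnegative coefficient fails to be strictly positive is to be zero). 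Therefore the constraint $\sum_{\eul\in\Uulc} A_{i,\eul} \le 1$ is equivalent to $|\Uulc \cap S_i| \le \lfloor 1/a \rfloor =: b_i$, an upper bound on the number of elements of $S_i$ that $\Uulc$ may contain.

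Next I would exhibit the matroid. Define $\mathcal{J}_i = \{\,\Uulc \subseteq \Eulc : |\Uulc \cap S_i| \le b_i\,\}$. This family is nonempty (it contains $\phi$) and downward closed, since shrinking $\Uulc$ can only decrease $|\Uulc \cap S_i|$. For the exchange property, take $\Aulc, \Bulc \in \mathcal{J}_i$ with $|\Aulc| > |\Bulc|$; I need some $\eul \in \Aulc \setminus \Bulc$ with $\Bulc \cup \{\eul\} \in \mathcal{J}_i$. Since $|\Aulc \setminus \Bulc| \ge |\Aulc| - |\Bulc| \ge 1$, pick any $\eul \in \Aulc \setminus \Bulc$; if $\eul \notin S_i$ then $|(\Bulc \cup \{\eul\}) \cap S_i| = |\Bulc \cap S_i| \le b_i$ and we are done, so assume every element of $\Aulc \setminus \Bulc$ lies in $S_i$. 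Then $|\Bulc \cap S_i| \le |\Bulc| - |\Bulc \setminus S_i|$ and, more to the point, $|\Aulc \cap S_i| \ge |\Aulc \cap S_i \cap \Bulc| + |\Aulc \setminus \Bulc| = |\Aulc \cap \Bulc \cap S_i| + |\Aulc \setminus \Bulc|$; combined with $|\Bulc \cap S_i| = |\Aulc \cap \Bulc \cap S_i| + |(\Bulc \setminus \Aulc) \cap S_i|$ and $|\Aulc \setminus \Bulc| > |\Bulc \setminus \Aulc| \ge |(\Bulc\setminus\Aulc)\cap S_i|$, this gives $|\Aulc \cap S_i| > |\Bulc \cap S_i|$, so $|\Bulc \cap S_i| < b_i$ and adding any $\eul \in \Aulc \setminus \Bulc \subseteq S_i$ keeps the count at most $b_i$. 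Hence $(\Eulc, \mathcal{J}_i)$ is a matroid — indeed a truncated uniform matroid on $S_i$ with free elements $\Eulc \setminus S_i$.

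Finally I would tie the two halves together: by the first paragraph, the feasible sets of the $i^{\rm th}$ knapsack constraint are precisely the members of $\mathcal{J}_i$, which by the second paragraph form a matroid; therefore the $i^{\rm th}$ knapsack constraint is a matroid constraint. I expect no genuine obstacle here — the only thing to be careful about is the counting in the exchange-property verification, and the role of the coefficients lying in $[0,1]$ (so that ``not strictly positive'' means ``zero,'' which is what makes the left-hand side depend only on $|\Uulc \cap S_i|$). One could alternatively shortcut the whole argument by citing \cite{barcia:knap}, of which this is the simplest special case, but giving the direct uniform-matroid construction is cleaner and self-contained.
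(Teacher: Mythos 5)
Your proof is correct. There is nothing to compare it against in the paper: the lemma is stated there without proof, offered as a ``simple sufficient condition'' immediately after citing \cite{barcia:knap} for the general characterization of when a knapsack constraint is a matroid constraint. Your direct argument --- equal positive coefficients collapse the constraint to the cardinality bound $|\Uulc\cap S_i|\le\lfloor 1/a\rfloor$ on the support $S_i$, and a cardinality bound on a fixed subset (with the elements outside $S_i$ unconstrained) defines a matroid, verified via downward closure and the exchange axiom --- is the natural self-contained route, and the counting in your exchange step ($|\Aulc\setminus\Bulc|>|\Bulc\setminus\Aulc|$ forces $|\Aulc\cap S_i|>|\Bulc\cap S_i|$ when $\Aulc\setminus\Bulc\subseteq S_i$) is sound. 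The two hypotheses you rightly make explicit are (i) nonnegativity of the coefficients (they lie in $[0,1]$ or $\{0,1\}$ here), so that ``not strictly positive'' means ``zero'' and the left-hand side depends only on $|\Uulc\cap S_i|$, and (ii) integrality of the count $|\Uulc\cap S_i|$, which justifies replacing $a\,|\Uulc\cap S_i|\le 1$ by $|\Uulc\cap S_i|\le\lfloor 1/a\rfloor$. This fills in a proof the paper leaves to the reader (or to the cited reference), and it buys a fully elementary verification of exactly the special case needed for Assumptions 1 and 2.
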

 We  note that
necessary and sufficient conditions for a knapsack constraint (with rational valued coefficients) to be a matroid constraint have been derived in \cite{wolsey:knap} and an efficient algorithm to verify such conditions is given in \cite{barcia:knap}.
Then consider the scenario for which the following two conditions are met.
 \begin{condition}\label{assump1}
   The control channel overhead  constraints are modeled using $L$ knapsack constraints,   where $L$  now represents the number of orthogonal (non-overlapping) control channel regions. Each  user (and hence all its corresponding 3-tuples) is associated with only one of these regions. Further, each constraint corresponds  to a  cardinality constraint   which enforces that no more than a given number of  3-tuples among those associated with the corresponding control region can be scheduled. Notice then that these $L$ control channel overhead constraints are sparse with $\Delta=1$ and since they satisfy Lemma \ref{assump:1} they are   matroid constraints as well. 
   \end{condition}
   We will show in the sequel that when Condition \ref{assump1} is met, the intersection of the $L$ control channel overhead constraints is itself a matroid constraint
  \begin{condition}\label{assump2}
  All the $M$ interference limit knapsack constraints are matroid constraints.
  \end{condition}
  We note that a simplistic modeling of the interference limit constraints can ensure that Condition \ref{assump2} is met.
   For instance, considering the $m^{th}$ interference limit knapsack constraint (corresponding to the adjacent  BS $m$) and recalling (\ref{eq:interf}), each 3-tuple  $\eul\in\Eulc$ can be assigned to one of two sets using an appropriate threshold $\delta_m$: one set comprising those which cause high interference $\{\beta_{\eul,m}>\delta_m\}$ and the other one comprising those which do not. Then a cardinality constraint is imposed only on the set of 3-tuples that cause high interference, i.e., the coefficients (in the $m^{th}$ interference limit knapsack constraint) of all 3-tuples belonging to the first set are set to $1/\gamma_{(m)}$  and the remaining ones are set to zero while the upper bound $\gamma_{(m)}$ is set to be the cardinality bound. Then, it can be seen that all resulting interference limit constraints (upon considering all the $M$ adjacent BSs)  satisfy Lemma \ref{assump:1} and hence are   matroid constraints.

The following result provides the worst-case guarantee offered by Algorithm I.
\begin{theorem}\label{thmG1}
 The complexity of  Algorithm I is $O(K^2N^4|\Wc|)$ and it yields a $\frac{1}{K}$ approximation to (\ref{eq:original2}). Further, if Conditions \ref{assump1} and \ref{assump2} are satisfied  then
Algorithm I yields a constant-factor $\frac{1}{2+M}$ approximation to (\ref{eq:original2}). 
 \end{theorem}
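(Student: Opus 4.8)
The plan is to treat the two assertions separately: first the complexity bound and the crude $\frac{1}{K}$ guarantee that holds unconditionally, and then the improved $\frac{1}{2+M}$ guarantee under Assumptions~\ref{assump1} and~\ref{assump2}. For the complexity, I would count the work done by the greedy algorithm (Algorithm~I): in each iteration the algorithm scans all candidate elements $\eul\in\Eulc$ and evaluates the marginal gain of adding each to the current solution, calling the $h(\cdot)$ oracle $O(|\Eulc|)$ times per iteration; since each selected element consumes one ``slot'' of the partition matroid, there are at most $K$ iterations, giving $O(K|\Eulc|)$ oracle calls. Since $|\Eulc| = O(K|\Wc|N^4)$, this gives the stated $O(K^2 N^4 |\Wc|)$ bound (with the oracle cost absorbed as in the theorem's hypothesis).

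For the unconditional $\frac{1}{K}$ bound, the idea is to exploit that the partition matroid caps the optimal solution at $K$ elements. Let $\Uulc^\star$ be an optimal solution to~(\ref{eq:original2}) and write it as a disjoint union of singletons $\Uulc^\star = \bigcup_{i} \{\eul_i^\star\}$ with $|\Uulc^\star|\leq K$. By sub-additivity of $h$ (Lemma~\ref{lem:lemH}), $h(\Uulc^\star)\leq \sum_i h(\{\eul_i^\star\})$. Each singleton $\{\eul_i^\star\}$ is itself a feasible solution (it trivially satisfies the matroid constraint and, since all knapsack right-hand sides are $\mathbf{1}$ and the coefficients lie in $[0,1]$, a single element is feasible for every knapsack). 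Hence $h(\{\eul_i^\star\}) \leq h(\{\eb^{\rm best}\})$ where $\eb^{\rm best}$ is the single best element, which the greedy algorithm considers (and does at least as well as) in its first step; therefore $h(\Uulc^\star) \leq K\, h(\{\eb^{\rm best}\}) \leq K\, h(\hat\Uulc)$, where $\hat\Uulc$ is the greedy output, using monotonicity of $h$ to conclude $h(\{\eb^{\rm best}\}) \leq h(\hat\Uulc)$.

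For the $\frac{1}{2+M}$ bound under Assumptions~\ref{assump1} and~\ref{assump2}, the key structural observation is that when those assumptions hold, all of the control-channel constraints and all of the interference-limit constraints become matroid constraints (as noted just before the theorem, via Lemma~\ref{assump:1}). Absorbing the $L$ control-channel matroids into the partition matroid leaves, at worst, the intersection of two matroids plus $M$ additional matroid constraints --- or, more carefully, the feasibility structure is the intersection of $M+2$ matroids. I would then invoke the classical result that the greedy algorithm for maximizing a monotone submodular function subject to the intersection of $p$ matroids is a $\frac{1}{p+1}$-approximation (Fisher--Nemhauser--Wolsey); with $p = M+2$ this does not quite match, so the sharper route is: the feasibility region is the intersection of the partition matroid with $M+1$ further matroids (the $L$ control-channel cardinality matroids collapse with the partition matroid because each element lies in exactly one control region and in exactly one user class, making their common refinement still a partition matroid), giving $p = M+1$ effective matroids beyond nothing, i.e. the greedy bound $\frac{1}{(M+1)+1} = \frac{1}{M+2}$.

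The main obstacle I anticipate is the bookkeeping in that last step: one must verify carefully that the partition matroid (one slot per user) together with the $L$ control-region cardinality matroids can be merged into a single matroid (a partition matroid refined by control regions, which works precisely because ``one element per user'' is at least as restrictive as, and compatible with, the per-region cardinality caps --- or one shows their intersection is still a matroid of the required type), so that the effective number of matroids is $M+1$ rather than $M+L+1$. If that merge is not clean, one falls back to treating them as $M+L+1$ separate matroids and gets only $\frac{1}{M+L+2}$, which is weaker than claimed; so pinning down the merge, or alternatively arguing that the greedy solution's feasibility for the control constraints is automatic given partition-matroid feasibility plus the cardinality structure, is the crux. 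Everything else --- monotonicity, submodularity of $h$ (from Theorem~\ref{thm:lteA}), and the generic greedy-on-matroid-intersection guarantee --- is invoked as a black box.
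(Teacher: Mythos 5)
Your proposal follows essentially the same route as the paper's proof: the same iteration/oracle-call count for the complexity bound, the same subadditivity-plus-best-feasible-singleton argument for the $\frac{1}{K}$ guarantee, and the same reduction to greedy maximization over the intersection of $M+1$ matroids (a $p$-system with $p=M+1$) to get $\frac{1}{M+2}$. The one point you flagged as the crux --- whether the user-partition matroid and the $L$ control-region cardinality matroids merge into a single matroid --- is resolved affirmatively in the paper (the merged object is a laminar rather than partition matroid, since under Assumption~\ref{assump1} the user classes refine the control regions; the paper justifies it by noting all maximal feasible sets have cardinality $\min\{K,\sum_{\ell}C_{\ell}\}$), so the effective matroid count is indeed $M+1$ as you hoped.
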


\proof Proved in Appendix \ref{app:thmG1}.

\begin{remark}
Let us reconsider the submodular maximization problem defined in (\ref{eq:Newprob}). This problem in fact  also represents a popular transmit antenna selection problem in point-to-point MIMO communications \cite{sanayei:AS}. Indeed, $K$ can be regarded as the total number of available transmit antennas while $C$ then denotes the number of transmit antennas that have to be selected and a normalization  factor $\sqrt{\frac{\rho}{C}}$, where $\rho$ denotes the SNR, can be absorbed into the matrix $\Hb$. Then, our result in Theorem \ref{thm:lteA}  proves that this transmit antenna selection problem is NP-hard. Next,   the greedy Algorithm I when specialized to this problem reduces to a known incremental successive transmit antenna selection algorithm \cite{sanayei:AS} but for which no approximation guarantees were as yet known. Notice that this problem satisfies Conditions 1 and 2 since the constraint in (\ref{eq:Newprob}) can be accommodated using just one control channel knapsack constraint that has  equal coefficients for all users. Then, invoking the result in Theorem  \ref{thmG1} (with $M=0$) we can infer that the greedy Algorithm I (or equivalently the incremental successive transmit antenna selection algorithm) offers a $1/2$ approximation to the transmit antenna selection problem. An analogous observation for the receive antenna selection problem was made recently and independently in \cite{Txselectvaze}.  In addition, \cite{Txselectvaze}  considers a different version of the  transmit antenna selection problem   in which the number of antennas to be selected, $C$, is not given as an input (but instead is an output) and classifies it as an open problem since it is not equivalent to a submodular maximization problem. We note here that  even for that version, we can obtain an approximation algorithm by sequentially running the greedy Algorithm $K$ times, initialized with inputs $C=1,\cdots,K$ respectively, and picking the overall best among the $K$ outputs. It is readily seen that such an algorithm will also yield a $1/2$ approximation since the output of each run is within $1/2$ of its respective optima.
\end{remark}
\begin{remark}
Recall that we have assumed that the BS employs an ideal receiver, which in practice can be closely approached by iterative Turbo receivers. However, when each user's queue is of infinite size (a.k.a infinitely backlogged case), the assigned rate-tuple   is a corner-point of the polymatroid in (\ref{eq:polreg}) (defined for the selected subset) and thus can be achieved using a simple MMSE-SIC receiver  \cite{VaranasiMK:ODF:Asil97}.
\end{remark}

 Notice that so far we have  assumed that computing $h(\Uulc)$ for any $\Uulc\subseteq\Eulc$ incurs a unit cost. We can indeed show that Algorithm I has polynomial complexity under a stricter notion that computing $f(\Uulc)$ (instead of $h(\Uulc)$) for any $\Uulc\subseteq\Eulc$ incurs a unit cost.\footnote{This assumption results in no loss of generality  since the worst-case cost of computing $f(\Uulc)$ is $O(NK^3)$.} To show this, it suffices to prove that $h(\Uulc)$ can be determined with a complexity
polynomial in $|\Uulc|$.  A key observation towards this end is that for any $\Uulc\subseteq\Eulc$, $f'(\Uulc)$ in (\ref{eq:deffp})  can be computed as
\begin{eqnarray}\label{eq:newfp}
  f'(\Uulc)=\sum_{\eul\in\Uulc}Q_{\eul}+\min_{\Rulc\subseteq\Uulc}\left\{f(\Rulc)-\sum_{\eul\in\Rulc}Q_{\eul}\right\},\;\forall\;\Uulc\subseteq\Eulc.
 \end{eqnarray}
Then, since the function $f(\Rulc)-\sum_{\eul\in\Rulc}Q_{\eul},\;\forall\;\Rulc\subseteq\Eulc$ is a submodular set function, we can  solve the minimization in (\ref{eq:newfp}) using submodular function minimization routines that have a complexity polynomial in $|\Uulc|$ \cite{iwata:submin}.
Thus, from (\ref{eq:newexpH1}) we can conclude that $h(\Uulc)$ can indeed be determined with a complexity
polynomial in $|\Uulc|$.
We now propose simple observations that can considerably speed up Algorithm I.
\begin{itemize}
\item {\em Lazy evaluations.} An important feature that speeds up the greedy algorithm substantially has been discovered and exploited in \cite{Minoux:greedy,krause:greedy}. In particular, due to the submodularity  of the objective function the incremental gain offered by a 3-tuple over any selected subset of 3-tuples not including it decreases monotonically as the selected subset grows larger.
    Thus, at any step in the algorithm, given a set of selected 3-tuples $\Sulc$ and a 3-tuple $\eul\in\Eulc\setminus\Sulc$ for which $h(\Sulc\cup\eul)$ has been evaluated, we do not have to evaluate $h(\Sulc\cup\eul')$ for another 3-tuple $\eul'\in\Eulc\setminus\Sulc$, if we can assert that $h(\Sulc\cup\eul)-h(\Sulc)\geq h(\Sulc'\cup\eul')-h(\Sulc')$ where $\Sulc'\subseteq\Sulc$ denotes the set of selected 3-tuples at a previous step. This results in no loss of optimality with respect to the original greedy algorithm.


\item {\em Exploiting subadditivity.}
Suppose that at any step of the greedy algorithm  we have a set of selected 3-tuples $\Sulc$.
 Further, let $\eul_1=(u,\Wb,\cb_1)$ and $\eul_2=(u,\Wb,\cb_2)$  be two 3-tuples in $\Eulc\setminus\Sulc$ such that $\cb_1$ and $\cb_2$ comprise of only one chunk each and are mutually non-intersecting. Then, letting $\eul'=(u,\Wb,\cb_1+\cb_2)$, we see that
 \begin{eqnarray}
 h(\Sulc\cup\eul')\leq h(\Sulc\cup\eul_1\cup\eul_2)\leq h(\Sulc\cup\eul_1) + h(\Sulc\cup\eul_2)
   \end{eqnarray}
   where the first inequality stems from the fact that $h(\Sulc\cup\eul')$ is monotonically increasing in the transmit PSD of $\eul'$  and the second inequality stems from the monotonicity and subadditivity of $h(.)$. Thus, we have that
   \begin{eqnarray}
 h(\Sulc\cup\eul')\leq 2\max\{ h(\Sulc\cup\eul_1), h(\Sulc\cup\eul_2)\}.
   \end{eqnarray}
 Then if  $\Sulc\cup\eul_1,\Sulc\cup\eul_2$ as well as $\Sulc\cup\eul'$ satisfy all the constraints,
  we can evaluate $h(\Sulc\cup\eul_1), h(\Sulc\cup\eul_2)$ and skip evaluating $h(\Sulc\cup\eul')$. By adopting this procedure over all 3-tuples in   $\Eulc\setminus\Sulc$, we can ensure that the 3-tuple selected will offer at-least $1/2$ the gain yielded by the locally optimal 3-tuple. Then, using a well known result on the greedy algorithm with an approximately optimal selection at each step \cite{nemhaus:analysis} we can conclude that this variation of our greedy algorithm will yield an approximation guarantee of
 $ \frac{1/2}{1/2+M+1}$ when Conditions \ref{assump1} and \ref{assump2} are satisfied.
\end{itemize}
Finally, in order to benchmark the performance of Algorithm I we derive two upper bounds. For convenience, we only consider the case where there are no knapsack constraints so that (\ref{eq:original2}) reduces to the maximization of a monotonic sub-modular set function  subject to one matroid constraint. Then,  we suppose that $\Uulc^{\rm opt}$ and $\hat{\Uulc}$ denote the optimal solution and that returned by Algorithm I. We obtain our first bound by specializing an  upper bound from \cite{Minoux:greedy}   (see also \cite{krause:greedy}) which is applicable to any monotonic sub-modular set function maximization subject to one matroid constraint, as
\begin{eqnarray}\label{eq:UB}
 h(\Uulc^{\rm opt})\leq  h(\hat{\Uulc}) + \sum_{k=1}^K\max_{\eul\in\Eulc_{(k)}\setminus\hat{\Uulc}}(h(\hat{\Uulc}\cup\eul)-h(\hat{\Uulc})),
   \end{eqnarray}
where $\{\Eulc_{(k)}\}$ have been defined in the proof of Lemma \ref{lem:Lem1}.
For our second bound we exhaustively enumerate each one of the $|\Wc|^K$ possible assignments of precoding matrices to users. Then, for each assignment we consider the weighted sum rate maximization over the uplink (\ref{eq:original}) after relaxing the per-user power constraint to one where only a per-user sum power constraint has to be satisfied, i.e., each user can be assigned any power value on any RB as long as it does not exceed its power budget. The latter problem can be efficiently solved via {\em convex optimization} \cite{yuWei:WF,mohseni:jsac}. Finally, we choose the largest weighted sum rate value across all assignments as the upper bound.

\section{Practical Modulation and Coding Schemes}

In the LTE-A uplink a scheduled user can be assigned one out of three modulations ($4,16\;\&\;64$ QAM) and an outer Turbo-code whose coding rate is one out of several available choices. Since the available outer codes are powerful and since the BS can employ near-optimal receivers (such as Turbo SIC) a reasonable choice for the achievable rate region is the following. Let $\Sc_{\eul}$ denote the constellation (with unit average energy and cardinality $S_{\eul}$) associated with 3-tuple $\eul\in\Eulc$. For any subset $\Aulc\subseteq \Eulc$ and any RB $n:1\leq n\leq N$, let $\Ic^{(n)}(\Aulc)$ denote the mutual information evaluated for a point-to-point MIMO channel whose output can be modeled as
\begin{equation}\label{eq:propfinmod}
\yb^{(n)}=\sum_{\eul\in\Aulc}\sqrt{p_{\eul}}c_{\eul}(n)\Hb^{(n)}_{\eul}\Wb^{(n)}_{\eul}\ub^{(n)}_{\eul} + \vb^{(n)},
\end{equation}
 where $\vb^{(n)}\sim\Cc\Nc(\Zrb,\Ib)$ is the additive Gaussian noise and $\ub^{(n)}_{\eul}\in\Sc_{\eul}^{N_t}$ is the input symbol vector corresponding to 3-tuple $\eul$ whose entries are independently and uniformly drawn from $\Sc_{\eul}$ and where $\ub^{(n)}_{\eul},\ub^{(n)}_{\eul'}$ are mutually independent for any $\eul\neq\eul'$.
 Then, for any $\Uulc\subseteq \Eulc$ an achievable rate region is given by
   \begin{eqnarray}\label{eq:region1}
 \left\{\rb=[r_{\eul}]_{\eul\in\Uulc}\in\Reals_+^{|\Uulc|}:\sum_{\eul\in\Aulc}r_{\eul}\leq \sum_{n=1}^N\Ic^{(n)}(\Aulc),\;\forall\;\Aulc\subseteq\Uulc\right\}.
 \end{eqnarray}
 Notice that in deriving (\ref{eq:region1}) we have assumed an ideal BS receiver as well as no DFT spreading by each user, both of which allow for higher achievable rates.\footnote{Neglecting the per-user  DFT spreading expands the rate region since the noise at the BS is assumed to be Gaussian and independent across RBs.}
 Unfortunately, no closed form expressions are available for $\Ic^{(n)}(\Aulc)$
  and the rate region in (\ref{eq:region1}) does not have a useful structure. Clearly the region defined before in (\ref{eq:polreg}) assuming Gaussian inputs is an outer bound which however can be loose. Here we obtain a tighter outer bound that also has a useful structure. We first offer the following result.
  \begin{proposition}\label{propfin}
  For any subset $\Aulc\subseteq \Eulc$  and any $n:1\leq n\leq N$, we have that
  \begin{eqnarray} \label{eq:propfin1}
   \Ic^{(n)}(\Aulc)\leq \underbrace{\min_{\Rulc\subseteq \Aulc}\left\{\log\left|\Ib+\sum_{\eul\in\Aulc\setminus\Rulc}p_{\eul}c_{\eul}(n)\Hb_{\eul}^{(n)}\Wb_{\eul}(\Hb_{\eul}^{(n)}\Wb_{\eul})^{\dag}\right|+  \sum_{\eul\in\Rulc}N_t\log(S_{\eul})\right\}}_{\define g^{(n)}(\Aulc)}
 \end{eqnarray}
 Further the set function $g:2^{\Eulc}\to\Reals_+$ defined as $g(\Aulc)=\sum_{n=1}^Ng^{(n)}(\Aulc),\;\forall\;\Aulc\subseteq\Eulc$, is a rank function.
  \end{proposition}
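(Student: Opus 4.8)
The plan is to establish the two assertions separately. For the bound $(\ref{eq:propfin1})$, fix $n$ and $\Aulc$ and read $\Ic^{(n)}(\Aulc)$ as the mutual information $I(\xb^{(n)}_{\Aulc};\yb^{(n)})$, where $\xb^{(n)}_{\Aulc}$ stacks the inputs $(\xb^{(n)}_{\eul})_{\eul\in\Aulc}$. Fixing any $\Rulc\subseteq\Aulc$, the chain rule gives
\[
\Ic^{(n)}(\Aulc)=I(\xb^{(n)}_{\Aulc\setminus\Rulc};\yb^{(n)})+I(\xb^{(n)}_{\Rulc};\yb^{(n)}\mid\xb^{(n)}_{\Aulc\setminus\Rulc}).
\]
The conditional term is at most $H(\xb^{(n)}_{\Rulc})=\sum_{\eul\in\Rulc}N_t\log(S_{\eul})$, since conditioning does not increase entropy and the $N_t$ entries of each $\xb^{(n)}_{\eul}$ are independent and uniform on the size-$S_{\eul}$ set $\Sc_{\eul}$. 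For the first term, write $\yb^{(n)}=\sb_{\Aulc\setminus\Rulc}+\sb_{\Rulc}+\vb^{(n)}$ with $\sb_{\Bulc}\define\sum_{\eul\in\Bulc}\sqrt{p_{\eul}}c_{\eul}(n)\Hb^{(n)}_{\eul}\Wb_{\eul}\xb^{(n)}_{\eul}$; because $\xb^{(n)}_{\Rulc}$ is independent of $(\xb^{(n)}_{\Aulc\setminus\Rulc},\vb^{(n)})$, disclosing $\xb^{(n)}_{\Rulc}$ to the receiver only increases the mutual information, so $I(\xb^{(n)}_{\Aulc\setminus\Rulc};\yb^{(n)})\le I(\xb^{(n)}_{\Aulc\setminus\Rulc};\sb_{\Aulc\setminus\Rulc}+\vb^{(n)})$. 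The maximum-entropy (Gaussian) bound, together with $\E[\xb^{(n)}_{\eul}(\xb^{(n)}_{\eul})^{\dag}]=\Ib$ (unit-energy constellation, i.i.d.\ entries), the mutual independence of the $\xb^{(n)}_{\eul}$, and $c_{\eul}(n)\in\{0,1\}$, then bounds this by $\log\bigl|\Ib+\sum_{\eul\in\Aulc\setminus\Rulc}p_{\eul}c_{\eul}(n)\Hb^{(n)}_{\eul}\Wb_{\eul}(\Hb^{(n)}_{\eul}\Wb_{\eul})^{\dag}\bigr|$. Summing the two estimates and minimizing over $\Rulc\subseteq\Aulc$ yields $(\ref{eq:propfin1})$.

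For the rank-function claim, I would note that $g^{(n)}(\cdot)$ has exactly the form of the function $f'$ in Lemma \ref{lem:Lem2}, with the submodular rank function $f$ there replaced by the per-RB function $f^{(n)}(\Aulc)\define\log\bigl|\Ib+\sum_{\eul\in\Aulc}p_{\eul}c_{\eul}(n)\Hb^{(n)}_{\eul}\Wb_{\eul}(\Hb^{(n)}_{\eul}\Wb_{\eul})^{\dag}\bigr|$ and the queue sizes $Q_{\eul}$ replaced by the caps $N_t\log(S_{\eul})$. Each $f^{(n)}$ is normalized, monotone and submodular by the same argument invoked for $(\ref{eq:firstSub})$ applied to a single RB, so Lemma \ref{lem:Lem2} identifies $g^{(n)}$ as the rank function of a polymatroid; in particular $g^{(n)}$ is normalized, monotone and submodular. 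Hence $g=\sum_{n=1}^{N}g^{(n)}$, a non-negative sum of normalized monotone submodular set functions, is itself normalized, monotone and submodular, i.e.\ a rank function.

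The bookkeeping parts --- the chain-rule split, the entropy bound on the $\Rulc$-block, and assembling the per-RB statements --- are routine. The step that needs care is the bound on $I(\xb^{(n)}_{\Aulc\setminus\Rulc};\yb^{(n)})$: one must check that treating the $\Rulc$-signals as genuine side information (rather than lumping them into the noise) is the inequality pointing the right way, and one must pin down the input correlation matrix, namely $\Ib$ --- a consequence of the unit-energy normalization of each constellation and of independence across transmit antennas and across elements --- so that the Gaussian upper bound reproduces precisely the log-determinant term in $g^{(n)}$. I would also remark that this same ``reveal-the-interferer'' device is what keeps the bound consistent with the Gaussian-input region $(\ref{eq:polreg})$ remaining a (looser) outer bound on $(\ref{eq:region1})$.
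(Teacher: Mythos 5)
Your proof is correct and follows essentially the same route as the paper: chain-rule decomposition over $\Rulc$ versus $\Aulc\setminus\Rulc$, the cardinality/entropy bound $N_t\log(S_{\eul})$ on the $\Rulc$-block, the Gaussian maximum-entropy bound on the remainder, and a minimization over $\Rulc$. The only difference is one of detail: you unpack the subadditivity step (via the ``reveal the $\Rulc$-inputs'' genie argument) that the paper states directly, and you make explicit, via the structure of Lemma \ref{lem:Lem2}, the rank-function claim that the paper dispatches with ``follows from basic definitions.''
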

  \begin{proof}
  Consider any $\Aulc\subseteq \Eulc, n:1\leq n\leq N$ and the model in (\ref{eq:propfinmod}).
     Using the chain rule for mutual information along with the fact that the inputs corresponding to any two distinct 3-tuples of $\Aulc$ are mutually independent, we can upper bound $\Ic^{(n)}(\Aulc)$ as
     \begin{eqnarray*}
     \Ic^{(n)}(\Aulc)\leq   \Ic^{(n)}(\Aulc\setminus\Rulc) + \sum_{\eul\in\Rulc}\Ic^{(n)}(\eul),
      \end{eqnarray*}
      for any $\Rulc\subseteq\Aulc$. Since the cardinality of the input corresponding to 3-tuple $\eul$ is $S_{\eul}^{N_t}$ we have that $\Ic^{(n)}(\eul)\leq N_t\log(S_{\eul})$. Then using the fact   that  for any given input covariance, Gaussian inputs (with the same covariance) maximize the mutual information (over the Gaussian noise channel model in (\ref{eq:propfinmod})), we have that
      \begin{eqnarray*}
       \Ic^{(n)}(\Aulc\setminus\Rulc)\leq \log\left|\Ib+\sum_{\eul\in\Aulc\setminus\Rulc}p_{\eul}c_{\eul}(n)\Hb_{\eul}^{(n)}\Wb_{\eul}(\Hb_{\eul}^{(n)}\Wb_{\eul})^{\dag}\right|. \end{eqnarray*}
       Since these arguments are valid for any subset $\Rulc\subseteq\Aulc$, we can deduce that (\ref{eq:propfin1}) is true.
    The remaining result follows from basic definitions.
      \end{proof}

In this context, we note that the bound in (\ref{eq:propfin1}) is a non-trivial generalization of a bound on the finite alphabet mutual information over a point-to-point fading channel employed in \cite{fabregas:LB} to derive a tight lower bound on the outage probability. However, that bound when applied to our case would only yield $\Ic^{(n)}(\eul)\leq\min\{\log|\Ib+p_{\eul}c_{\eul}(n)\Hb_{\eul}^{(n)}\Wb_{\eul}(\Hb_{\eul}^{(n)}\Wb_{\eul})^{\dag}|,  N_t\log(S_{\eul})\}$ for any $\eul\in\Eulc$.

    Next, we   outer bound the region in (\ref{eq:region1}) as
    \begin{eqnarray}\label{eq:region2}
 \Tulk(\Uulc,g)\define \left\{\rb=[r_{\eul}]_{\eul\in\Uulc}\in\Reals_+^{|\Uulc|}:\sum_{\eul\in\Aulc}r_{\eul}\leq g(\Aulc),\;\forall\;\Aulc\subseteq\Uulc\right\}.
 \end{eqnarray}
  Invoking Proposition \ref{propfin} we use the fact that $g(.)$ is a rank function from which it follows that the region $\Tulk(\Uulc,g)$ is a polymatroid. Then invoking Lemma \ref{lem:Lem2} we can infer the following result.
   \begin{proposition}\label{propfin2}
  For any choice of selected 3-tuples $\Uulc\subseteq\Eulc$, the  rate region
   $\Tulk(\Uulc,g')\define \Tulk(\Uulc,g)\cap \Bulk(\Uulc)$ is a polymatroid
 which is characterized by the rank function
  \begin{eqnarray}\label{eq:gpdef}
 g'(\Aulc)=   \min_{\Rulc\subseteq \Aulc}\left\{g(\Aulc\setminus\Rulc) +\sum_{\eul\in\Rulc}Q_{\eul}\right\}, \;\forall\;\Aulc\subseteq \Uulc.
 \end{eqnarray}
  \end{proposition}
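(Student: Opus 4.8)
The plan is to obtain Proposition~\ref{propfin2} as an immediate corollary of Lemma~\ref{lem:Lem2}, after noting that the only property of $f(.)$ actually used in Lemma~\ref{lem:Lem2} --- and in the construction of the polymatroid in~(\ref{eq:polreg}) --- is that it is a rank function, a property that Proposition~\ref{propfin} has just established for $g(.)$.

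First I would record that, since $g(.)$ is a rank function by Proposition~\ref{propfin} (monotone, submodular, and normalized with $g(\phi)=0$), the region $\Tulk(\Uulc,g)$ defined in~(\ref{eq:region2}) is a polymatroid for every $\Uulc\subseteq\Eulc$, by exactly the same reasoning that showed $\Pulk(\Uulc,f)$ in~(\ref{eq:polreg}) is a polymatroid~\cite{edmonds:poly}. Next, I would observe that $\Bulk(\Uulc)$ is itself a polymatroid: it is the box $\{\rb\in\Reals_+^{|\Uulc|}: r_{\eul}\leq Q_{\eul}\;\forall\;\eul\in\Uulc\}$, which is the independence polyhedron of the modular (hence submodular, monotone, normalized) set function $\Aulc\mapsto\sum_{\eul\in\Aulc}Q_{\eul}$.

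The core step is then to invoke the standard fact from polymatroid theory~\cite{edmonds:poly} that the intersection of a polymatroid with rank function $\rho$ and a box with rank function $\Aulc\mapsto\sum_{\eul\in\Aulc}b_{\eul}$ is again a polymatroid, characterized by the rank function $\Aulc\mapsto\min_{\Rulc\subseteq\Aulc}\{\rho(\Aulc\setminus\Rulc)+\sum_{\eul\in\Rulc}b_{\eul}\}$. Applying this with $\rho=g$ and $b_{\eul}=Q_{\eul}$ shows that $\Tulk(\Uulc,g')=\Tulk(\Uulc,g)\cap\Bulk(\Uulc)$ is a polymatroid characterized precisely by the $g'$ in~(\ref{eq:gpdef}); this is the identical derivation that produced $f'$ from $f$ and $\Bulk(\Uulc)$ in Lemma~\ref{lem:Lem2}, so the statement follows by substituting $g$ for $f$ throughout that lemma. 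For completeness I would also note that $g'$ is a rank function, which is routine: $g'(\phi)=0$ and monotonicity are immediate, and submodularity of $g'$ follows from the submodularity of $g$ and of the modular term by the same minimization argument used for $f'$ in~(\ref{eq:deffp}).

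I do not expect a genuine obstacle: all the real work is already contained in Proposition~\ref{propfin} (bounding the finite-alphabet mutual information and establishing that $g$ is a rank function) together with the polymatroid-intersection lemma that Lemma~\ref{lem:Lem2} already relied on. The only point deserving a moment of care is to confirm that Lemma~\ref{lem:Lem2} was argued abstractly, using nothing about $f$ beyond its being a rank function, so that the replacement of $f$ by $g$ is legitimate; a quick inspection of its statement and of the structure of~(\ref{eq:polreg})--(\ref{eq:deffp}) confirms this.
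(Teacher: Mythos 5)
Your proposal is correct and matches the paper's own (very brief) argument: the paper likewise derives Proposition~2 by combining Proposition~1 (that $g$ is a rank function, so $\Tulk(\Uulc,g)$ is a polymatroid) with the polymatroid--box intersection fact of Lemma~2, applied with $g$ in place of $f$. Your extra care in checking that Lemma~2 uses nothing about $f$ beyond its being a rank function is exactly the point the paper leaves implicit.
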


Then, upon by defining
\begin{eqnarray*}
  h'(\Uulc) = \max_{\rb=[r_{\eul}]_{\eul\in\Uulc}\atop\rb\in \Tulk(\Uulc,g') }  \left\{\sum_{\eul\in\Uulc}\alpha_{\eul}r_{\eul}\right\},\;\;\forall\;\Uulc\subseteq\Eulc,
 \end{eqnarray*}
 we consider the optimization problem
 \begin{eqnarray}\label{eq:orig3}
 \nonumber \max_{\Uulc\subseteq\Eulc} \{h'(\Uulc)\}\;\; {\rm s.t.}\\
 \nonumber\Uulc\in\Iulk;\\
  \Ab_I\xb_{\Uulc}\leq {\bf 1}_M;\;\;\Ab_C\xb_{\Uulc}\leq \bb.
 \end{eqnarray}
  As before, it can be shown that the optimization problem in (\ref{eq:orig3}) is the maximization of a monotonic submodular   function subject to one matroid and multiple knapsack constraints. Algorithm I and its associated results are thus applicable.

\section{Simulation Results}

In this section we present our simulation results.
We simulate an uplink  wherein the BS is equipped with four receive antennas and each user has up-to two transmit antennas.  The system has $1024$ sub-carriers out of which  $300$ sub-carriers  divided into 25 RBs (comprising of 12 consecutive sub-carriers each) are available as  data sub-carriers that are used for serving the users.
  We assume 10 active users, all of whom have identical maximum transmit powers and identical path loss factors. We then use the SCM Urban Macro channel model \cite{3gpp}  to  generate the channel between each user and the base-station in an independent identically distributed (i.i.d.) manner. The antenna spacing at the BS is set to be $10\;\lambda$ while that at each user is set to be $1\;\lambda$. In all the results given below  we assume that the BS employs the optimal receiver and each user can employ an unconstrained (Gaussian) input alphabet. Furthermore, unless otherwise mentioned, we assume an infinitely backlogged traffic model wherein each user has an infinite buffer size.  \footnote{We normalize the per-user channels and the noise variance at the BS appropriately and refer to the max transmit power of each user as the (transmit) SNR.} Also, the  per-user weights which are given as inputs to the scheduling algorithm are all set to one so that the objective in (\ref{eq:original}) reduces to the sum rate. We note that since the system considered is homogeneous, fairness among users will also be ensured.

   In Fig. \ref{fig_plot2}, we assume no interference limit or control channel overhead constraints. We first  consider the case where each user is equipped with just one transmit antenna
    and plot the average cell spectral efficiency curve  obtained   when Algorithm I is employed by the BS scheduler. We then consider the case where
    each user is equipped with two transmit antennas and can use an antenna  selection codebook, i.e., $\Wc=\{[1;0],[0;1]\}$ along with the case where an expanded codebook ($\Wc=\{[1;0],[0;1],[1;1]/\sqrt{2},[1;-1]/\sqrt{2},[1;-\sqrt{-1}]/\sqrt{2},[1;\sqrt{-1}]/\sqrt{2}\}$ \cite{3gpp}) can be used for each user. For each curve, we plot a corresponding upper bound using (\ref{eq:UB}).  {\em We caution here that while the upper bound in (\ref{eq:UB}) is very easy to compute, indeed the additional complexity to compute the bound once the solution of Algorithm I is available scales only linearly in the number of users, the bound itself need not be achievable or tight. Its main purpose is to show that the average performance of Algorithm I is significantly superior to its worst-case guarantee, especially over large examples where computing the optimal solution via brute force enumeration is not tractable.} From the figure we observe that in each case, the performance of Algorithm I  is more than $75 \%$ of the  upper bound, which is   superior to the worst case guarantee $1/2$ (obtained by specializing the result in Theorem \ref{thmG1}). Notice that  antenna selection yields a gain of about 1dB over the system with single transmit antenna users while the expanded codebook yields a further gain of about $0.6 \;dB$. However, this additional gain due to the expanded codebook requires an additional power amplifier at each user since simultaneous transmission from both transmit antennas needs to be supported by each user. While antenna selection can be realized with only one power amplifier at each user, in practise it  incurs a switching loss of about $0.4\;dB$. Finally, we note here that the linear increase observed for the spectral efficiency is due to the fact that we have plotted the spectral efficiency versus SNR in dB (or equivalently the logarithm of the absolute SNR).

 In Fig. \ref{fig_plot4} we consider an uplink where each user is equipped with just one transmit antenna
    as well as the case where
    each user is equipped with two transmit antennas and can use an antenna  selection codebook.  We   plot the spectral efficiency obtained upon using Algorithm 1 when each user can be assigned at-most one chunk (enforced by defining the set of feasible RB allocations accordingly) as well as spectral efficiency obtained when each user can be assigned up-to two chunks. From the figure we see that the at-most one chunk restriction does not result in any significant degradation and indeed can be enforced to reduce scheduling complexity as well as to reduce the per-user PAPR. Higher bandwidths (translating to a greater number of available RBs exhibiting greater frequency selectivity) can bring more gain for allowing up-to two chunks per scheduled user.

In Fig. \ref{fig_newUB} we consider the uplink of Fig. \ref{fig_plot4} but where there are seven active users.  We   plot the spectral efficiency obtained upon using Algorithm 1 when each user can be assigned at-most one chunk, along with the corresponding convex optimization based upper bound described in Section \ref{sec:musetup} (referred to in the legend as Imp-UB). It is seen that the performance of Algorithm I is within $5\%$ of this upper bound. While the convex optimization based upper bound is much tighter and reveals the  exceptional performance of Algorithm I, it is computationally demanding to obtain and seems infeasible for larger examples, such as the one in Fig. \ref{fig_plot2} with ten users  and an expanded codebook of  cardinality six.

 In Fig. \ref{fig_plot3} we consider an uplink where there are 15 RBs available for scheduling users and each user is equipped with two transmit antennas and can use an antenna  selection codebook. We impose  a constraint that no more than five users can be scheduled in each scheduling interval. We first plot the spectral efficiency obtained upon using Algorithm 1 with   one control channel overhead constraint to enforce the user limit. In particular, this  constraint has an equal coefficient of $1$ for each user and a budget limit of $5$. Then, we consider two user pre-selection strategies wherein a pool of 5 users is pre-selected in each interval and Algorithm 1 is then used on this pool without any constraints. The intention behind user pre-selection is to reduce scheduling complexity. In the first strategy a greedy rule is employed wherein the reward associated with selecting a user is set equal to the maximum rate that user can offer on any RB and the 5 users with the 5 largest rewards are pre-selected. In the second strategy 5 users are randomly pre-selected. From the figures it is evident that random pre-selection can result in a much degraded performance whereas greedy pre-selection seems a good method to achieve complexity reduction without significant performance degradation.

In the following set of figures we use the 6 path equal gain i.i.d. Rayleigh fading channel model to generate the channel between each user and the base-station.

In Fig. \ref{fig_plotincu} we consider the impact of the number of users ($K$) on the system performance
over an uplink which has  $N=20$ RBs available and wherein each user has one transmit antenna.  We consider two values of  transmit SNRs and first capture the cell spectral efficiency (obtained when Algorithm I is employed by the BS scheduler) as the number of users increases. We then depict the average per-user spectral efficiency.
  Notice first  that the cell spectral efficiency increases only  logarithmically in the number of users since the number of receive antennas is held fixed at four and consequently the per-user spectral efficiency is decreasing in the number of users (i.e., it is $o(K)$). Moreover, we note that in all the cases considered for SNR $18\;dB$, the performance of Algorithm I is more that $75\%$ of the upper bound in (\ref{eq:UB}).

Next, in Figure \ref{fig_plotfb} we assess the impact of finite buffers over the uplink of Fig. \ref{fig_plotincu} but where there are $N=10$ RBs available to service $K=10$ users. In addition, each user can be assigned at-most one chunk of RBs. We assume a  fixed arrival rate per-user which is identical across all users and consider four different values for this arrival rate along with an SNR of $13\;dB$. In each case we plot the cell spectral efficiency obtained when Algorithm I is employed by the BS scheduler, as well as that obtained when a heuristic scheduler is employed. In particular, the heuristic we consider is the one where Algorithm I is first employed assuming infinite buffer sizes. Then, the finite buffer size constraint is imposed separately on each scheduled user.
From the figure we note that at low arrival rates the system is not resource limited in that all users can be simultaneously assigned   rates equal to their respective buffer sizes and any simple scheduling algorithm will suffice. However, at moderate values of arrival rates directly incorporating the buffer sizes in the resource allocation step is quite beneficial. At large values of arrival rates the performance of the heuristic will again approach that of Algorithm I since the buffer size constraints will be increasingly irrelevant.

Finally, in Figure \ref{fig_plotcompnew} we compare the performance of Algorithm I with that of the other algorithms that have been proposed before. In this comparison, we assume that each user has one transmit antenna and can be assigned at-most one chunk and there are $N=20$ RBs available to service the users. We have considered, to the best of our knowledge, all algorithms that   yield  feasible solutions to the problem at hand. In particular, we plot the performance of three algorithms that have been proposed for single-user scheduling over the LTE uplink. These include a greedy heuristic proposed in \cite{Lee-UL-2009}, an approximation algorithm referred to as benefit-doubling (BD) proposed in \cite{multiserver:2009} and another
approximation algorithm based on the local ratio test (LRT) proposed in \cite{Yang:ULinfo}. In addition, we also plot the performance of another approximation algorithm, referred to here as the enhanced local ratio test (ELRT) based algorithm \cite{prasad:wiopt12}, proposed for multi-user scheduling over the LTE uplink where up-to two users can be simultaneously scheduled on an RB provided that any pair of overlapping users are assigned the same set of RBs (a.k.a. the complete overlap constraint). From the figures, we see that Algorithm I yields very significant gains over the previously proposed algorithms. These gains stem from two facts. The first one is that multi-user scheduling over the LTE-A uplink enables substantial gains by allowing multiple users to be co-scheduled on an RB and by relaxing the complete overlap constraint. The second fact is that Algorithm I is near-optimal which allows it to capture almost all of the available gains.

\section{Conclusions}
We considered resource allocation in the 3GPP LTE-A  cellular uplink which allows for MIMO transmission from each scheduled user as well as multi-user  scheduling wherein multiple users can be assigned the same time-frequency resource.  We showed that the resulting resource allocation problem is NP-hard and then   proposed constant-factor polynomial-time   approximation  algorithms.

\appendix

\section{Definitions}\label{app:defns}
We capture some basic known definitions that are invoked in the paper.
\begin{definition}
Given a ground set $\Omega$, we define its power set  (i.e., the set containing all the subsets of  $\Omega$) as $2^{\Omega}$. Then,
a non-negative real valued function defined on the subsets of $\Omega$,
$h:2^{\Omega}\to\Reals_+$ is a {\em monotonic} set function if and only if it satisfies, $0\leq h(\Ac)\leq h(\Bc),\;\forall\;\Ac\subseteq\Bc\subseteq\Omega$. In addition,
the set function is also a
{\em submodular} set function if and only if
\begin{eqnarray}\label{eq:submoddef}
h(\Bc\cup a)-h(\Bc)\leq h(\Ac\cup a)-h(\Ac),\;\forall\;\Ac\subseteq\Bc\subseteq\Omega\;\&\;a\in\Omega\setminus\Bc.
\end{eqnarray}
Furthermore,  the set function is also a
 {\em rank} function function if it is normalized, i.e., $h(\phi)=0$, where $\phi$ denotes the empty set.
 Then, the region defined as $\Pc(\Omega,h)=\left\{\rb=[r_e]_{e\in\Omega}\in\Reals_+^{|\Omega|}: \sum_{e\in\Ac}r_{e}\leq h(\Ac),\;\forall\;\Ac\subseteq\Omega \right\}$ is a {\em polymatroid}.

A {\em knapsack constraint} on the elements of $\Omega$ is a constraint that can be expressed as
 $\sum_{e\in\Omega}a_e\Xc_e\leq b$ for some non-negative scalars $\{a_e\},b\in\Reals_+$ and where $\Xc_e$ is an indicator variable which is one if element $e$ is chosen and zero otherwise. Without loss of generality, we can assume that $a_e\leq b,\;\forall\;e\in\Omega$.
 \end{definition}
\begin{definition}
  $(\Omega,\Iul)$, where $\Iul$ is a collection of some subsets of $\Omega$, is said to be a {\em matroid} if $\Iul$ is an {\em independence family}:
  \begin{itemize}
 \item $\Iul$ is downward closed, i.e., $\Ac\in\Ic\;\&\;\Bc\subseteq\Ac\Rightarrow\Bc\in\Iul$
  \item For any two members $\Fc_1\in\Iul$ and $\Fc_2\in\Iul$ such that $|\Fc_1|<|\Fc_2|$, there exists
   $e\in\Fc_2\setminus\Fc_1$ such that $\Fc_1\cup\{e\}\in\Iul$. This property is referred to as the exchange property.
 \end{itemize}
\end{definition}
\begin{definition}
 $(\Omega,\Iul)$ is said to be
a {\em partition matroid} when   there exists a partition $\Omega=\cup_{i=1}^J\Omega_i$, where $\Omega_i\cap\Omega_j=\phi,\;\forall\;i\neq j$, along with integers $n_i\geq 1\;\forall\;i$ such that
\begin{eqnarray}
 \Bc\subseteq\Omega: |\Bc\cap\Omega_i|\leq n_i\;\forall\;i \Leftrightarrow \Bc\in\Iul
  \end{eqnarray}
\end{definition}
\begin{definition}
An optimization problem is said to be {\em NP-hard} if   any algorithm that returns an optimal solution to the problem at hand given any instance as an input, and whose worst-case complexity (over all instances) scales polynomially in the size of the ground set, can be used to construct such an algorithm for each NP-complete problem. Construction of such algorithms for the latter class of NP-complete problems has been a long standing open problem \cite{Karp_NP72} and indeed the existence of such algorithms is thought to be highly improbable.


A {\em constant factor  approximation algorithm} for a combinatorial optimization problem (in which the objective must be maximized), is an algorithm which returns a feasible solution   given any instance as an input such that the objective value obtained using the returned solution is no less than $\Gamma$ times the optimal objective value for that instance. The factor $\Gamma$ is referred to as the constant-factor and lies in the unit interval  $[0,1]$ and must be independent of the input instance. 
\end{definition}

\section{Proof of Theorem \ref{thm:lteA}}\label{app:lteA}
\begin{proof} We will first show that (\ref{eq:original2}) is the maximization of a monotonic sub-modular set function subject to one matroid and multiple knapsack constraints. Invoking Lemma \ref{lem:Lem1}, it suffices to show that the function $h(.)$ is a monotonic submodular set function. From the definition of $h(.)$ in (\ref{eq:defnh}) it is readily seen that it is monotonic, i.e., $h(\Uulc)\leq h(\Vulc),\;\forall\;\Uulc\subseteq\Vulc\subseteq\Eulc$. There are multiple ways to prove the submodularity of $h(.)$ and we detail one which directly shows that $h(.)$ 
 satisfies  the property in (\ref{eq:submoddef}) for any two subsets $\Uulc\subseteq\Vulc$ in $\Eulc$ and any
 element $\eul\in\Eulc\setminus\Vulc$. Towards this end,
let $o(.,.)$ denote an ordering function such that for any
 subset $\Uulc\subseteq\Eulc$, $o(\Uulc,k)$ is the 3-tuple having the $k^{th}$ largest weight among the 3-tuples in $\Uulc$. Hence we have that $\alpha_{o(\Uulc,1)}\geq \alpha_{o(\Uulc,2)}\geq  \alpha_{o(\Uulc,|\Uulc|)}$. Further, let us adopt the convention that for any subset $\Uulc\subseteq\Eulc$,
  $o(\Uulc,0)=\phi\;\&\;o(\Uulc,k)=\phi,\;\forall\;k\geq |\Uulc|+1\;\&\;\alpha_{\phi}=0$. Defining $J=|\Uulc|$,
    we now invoke Lemma \ref{lem:Lem2} together with the important property that the {\em rate-tuple in each polymatroid that maximizes the weighted sum is determined by the corner point of that polymatroid in which the 3-tuples are arranged in the non-increasing order of their weights \cite{edmonds:poly,tse:poly}}. Thus, we can express $h(.)$ as
    \begin{eqnarray}\label{eq:newexpH1}
   h(\Uulc)=\alpha_{o(\Uulc,1)}f'(o(\Uulc,1))+\sum_{k=2}^{J}\alpha_{o(\Uulc,k)}[f'(\{o(\Uulc,1),\cdots,o(\Uulc,k)\})-f'(\{o(\Uulc,1),\cdots,o(\Uulc,k-1)\})]. \;\;\;\;\;
 \end{eqnarray}
 Let $q$ be the smallest integer in $\{1,\cdots,J\}$ for which $\alpha_{\eul}>\alpha_{o(\Uulc,q)}$ so that $\alpha_{\eul}>\alpha_{o(\Uulc,j)},\;\forall j\geq q$ whereas
   $\alpha_{\eul}\leq \alpha_{o(\Uulc,j)},\;\forall j< q$.
As a result, using (\ref{eq:newexpH1}) we obtain that
 \begin{eqnarray}\label{eq:newdiffuexp}
  \nonumber h(\Uulc\cup\eul)- h(\Uulc)=\alpha_{\eul}[f'(\{o(\Uulc,1),\cdots,o(\Uulc,q-1)\}\cup\eul)-f'(\{o(\Uulc,1),\cdots,o(\Uulc,q-1)\})] +\;\;\;\;\;\\   \nonumber \sum_{k=q}^{J}\alpha_{o(\Uulc,k)}\left[f'(\{o(\Uulc,1),\cdots,o(\Uulc,k)\}\cup\eul)-f'(\{o(\Uulc,1),\cdots,o(\Uulc,k-1)\}\cup\eul) \right.\\ \nonumber\left.   -f'(\{o(\Uulc,1),\cdots,o(\Uulc,k)\})+f'(\{o(\Uulc,1),\cdots,o(\Uulc,k-1)\})\right]\\\nonumber
  =\alpha_{\eul}[f'(\{o(\Uulc,1),\cdots,o(\Uulc,q-1)\}\cup\eul)-f'(\{o(\Uulc,1),\cdots,o(\Uulc,q-1)\})] +\\    \nonumber \sum_{k=q}^{J}\alpha_{o(\Uulc,k)}\left[f'(\{o(\Uulc,1),\cdots,o(\Uulc,k)\}\cup\eul)-f'(\{o(\Uulc,1),\cdots,o(\Uulc,k)\}) \right.\\ \left. -f'(\{o(\Uulc,1),\cdots,o(\Uulc,k-1)\}\cup\eul)  +f'(\{o(\Uulc,1),\cdots,o(\Uulc,k-1)\})\right]
 \end{eqnarray}
which can be re-written as
 \begin{eqnarray}\label{eq:newdiffu}
   \nonumber  h(\Uulc\cup\eul)- h(\Uulc)=(\alpha_{\eul}-\alpha_{o(\Uulc,q)})[f'(\{o(\Uulc,1),\cdots,o(\Uulc,q-1)\}\cup\eul)-f'(\{o(\Uulc,1),\cdots,o(\Uulc,q-1)\})] +\\     \sum_{k=q}^{J}(\alpha_{o(\Uulc,k)}-\alpha_{o(\Uulc,k+1)})[f'(\{o(\Uulc,1),\cdots,o(\Uulc,k)\}\cup\eul)-f'(\{o(\Uulc,1),\cdots,o(\Uulc,k)\})].\;\;\;\;\;
 \end{eqnarray}
Consider now the set $\Vulc$ and suppose that $o(\Vulc,i_k)=o(\Uulc,k),\;\forall\;1\leq k\leq J$, where $1\leq i_1<i_2<\cdots<i_J\leq |\Vulc|$.
Now let $r$ be the smallest integer in $\{1,\cdots,|\Vulc|\}$ for which $\alpha_{\eul}>\alpha_{o(\Vulc,r)}$ and clearly we have $r\leq i_q$.
 Analogous to (\ref{eq:newdiffuexp}), we  express  $h(\Vulc\cup\eul)- h(\Vulc)$ as
  \begin{eqnarray}\label{eq:newdiffvexp}
  \nonumber  h(\Vulc\cup\eul)- h(\Vulc)=\alpha_{\eul}[f'(\{o(\Vulc,1),\cdots,o(\Vulc,r-1)\}\cup\eul)-f'(\{o(\Vulc,1),\cdots,o(\Vulc,r-1)\})] +
   \\ \nonumber   \sum_{k=r}^{|\Vulc|}\alpha_{o(\Vulc,k)}\left[f'(\{o(\Vulc,1),\cdots,o(\Vulc,k)\}\cup\eul)-f'(\{o(\Vulc,1),\cdots,o(\Vulc,k)\}) \right.\\ \left. -f'(\{o(\Vulc,1),\cdots,o(\Vulc,k-1)\}\cup\eul)  +f'(\{o(\Vulc,1),\cdots,o(\Vulc,k-1)\})\right]
 \end{eqnarray}
Due to sub-modularity of $f'(.)$ (cf. property in (\ref{eq:submoddef})) each of the terms corresponding to $k=r,\cdots,|\Vulc|$ in the summation in (\ref{eq:newdiffvexp}) is non-positive. Consequently, we can upper bound
 $h(\Vulc\cup\eul)- h(\Vulc)$ by first dropping the terms corresponding to $k=i_J+1,\cdots,|\Vulc|$ and then reducing the weights of the remaining terms as
  $\alpha_{o(\Vulc,k)}\to \alpha_{o(\Vulc,i_j)}=\alpha_{o(\Uulc,j)},\;\forall\;k:i_{j-1}<k\leq i_j,j\geq q+1$ while $\alpha_{o(\Vulc,k)}\to \alpha_{o(\Vulc,i_q)}=\alpha_{o(\Uulc,q)},\;\forall\;k:r\leq k\leq i_q$.
 Next, we order and parse the remaining $3-$tuples in $\Vulc$ as
 \begin{eqnarray*}
  \left\{\underbrace{o(\Vulc,1),\cdots,o(\Vulc,r-1)}_{\define \Sulc_{q-1}},\underbrace{o(\Vulc,r),\cdots,o(\Vulc,i_q)}_{\define \Sulc_{q}}, \underbrace{o(\Vulc,i_q+1),\cdots,o(\Vulc,i_{q+1})}_{\define \Sulc_{q+1}},\cdots,\underbrace{o(\Vulc,i_{J-1}+1),\cdots,o(\Vulc,i_J)}_{\define \Sulc_{J}}\right\}\;\;\;\;\;\;\;
 \end{eqnarray*}
Combining all terms that have common weights (post the reduction step) we obtain the upper bound to be
\begin{eqnarray}\label{eq:newdiffv}
  \nonumber  h(\Vulc\cup\eul)- h(\Vulc)\leq (\alpha_{\eul}-\alpha_{o(\Uulc,q)})[f'(\Sulc_{q-1}\cup\eul)-f'(\Sulc_{q-1})] +\;\;\;\;\;\;\;\;\;\;\;\;\;\;\;\;\;\;\;\;\;\;\;\;\;\;\;\;\;\;\;\;\;\;\;\;\;\;\;\;
   \\    \sum_{k=q}^{J} (\alpha_{o(\Uulc,k)}-\alpha_{o(\Uulc,k+1)})[f'(\Sulc_{q-1}\cup\cdots\cup\Sulc_k\cup\eul)-f'(\Sulc_{q-1}\cup\cdots\cup\Sulc_k)]
 \end{eqnarray}
Finally, comparing (\ref{eq:newdiffu}) and (\ref{eq:newdiffv}) we note that $(\alpha_{\eul}-\alpha_{o(\Uulc,q)})> 0\;\&\;\{o(\Uulc,1),\cdots,o(\Uulc,q-1)\}\subseteq\Sulc_{q-1}$ and $(\alpha_{o(\Uulc,k)}-\alpha_{o(\Uulc,k+1)})\geq 0\;\&\;\{o(\Uulc,1),\cdots,o(\Uulc,k)\}\subseteq\Sulc_{q-1}\cup\cdots\cup\Sulc_k,\;\forall q\leq k\leq J$. Consequently,
we can invoke the submodularity of $f'(.)$ again to conclude that the upper bound in (\ref{eq:newdiffv}) is less than $h(\Uulc\cup\eul)- h(\Uulc)$ so that $h(\Vulc\cup\eul)- h(\Vulc)\leq h(\Uulc\cup\eul)- h(\Uulc)$, which establishes the submodularity of $h(.)$.

%

We will now show that (\ref{eq:original2}) is an NP hard problem.
We will consider instances of the problem where the number of RBs $N=1$,  all users have identical weights, unit powers, infinite queues and one transmit antenna each and where the codebook $\Wc$ is degenerate, i.e., $\Wc=\{1\}$.  Thus, we have $|\Eulc|=K$. In addition, we assume that the number of receive antennas is equal to the number of users $K$ so that a given  input of user channels forms a $K\times K$ matrix, denoted here by $\Hb=[\hb_1,\cdots,\hb_K]$. Further, we will assume only one knapsack constraint which in particular is a cardinality constraint on the number of users that can be scheduled on the one available RB. We will show that the problem specialized to these instances is also NP-hard so that the original problem is NP-hard.
Note that the matroid constraint now becomes redundant and (\ref{eq:original2}) simplifies to  maximizing the sum rate under a cardinality constraint
\begin{eqnarray}\label{eq:Newprob}
   \max_{\Db=\diag\{d_1,\cdots,d_K\}\atop d_k\in\{0,1\}\;\forall\;k \;\&\;\sum_{k=1}^K d_k\leq C}\log|\Ib+\Hb\Db\Hb^{\dag}|,
   \end{eqnarray}
where $C:1\leq C\leq K$ is the input maximum cardinality.
Now using the determinant equality
\begin{eqnarray}
   \log|\Ib+\Hb\Db\Hb^{\dag}|= \log|\Ib+\Db\Hb^{\dag}\Hb\Db|
   \end{eqnarray}
together with the monotonicity of the objective function, we can re-write (\ref{eq:Newprob}) as
\begin{eqnarray}\label{eq:Newprob2}
   \max_{\Db=\diag\{d_1,\cdots,d_K\}\atop d_k\in\{0,1\}\;\forall\;k \;\&\;\sum_{k=1}^K d_k= C}\log |\Ib+\Db\Hb^{\dag}\Hb\Db|.
   \end{eqnarray}
   Note that (\ref{eq:Newprob2}) is equivalent to determining the $C\times C$ principal sub-matrix of the positive definite matrix $\Ib+\Hb^{\dag}\Hb$ having the maximum determinant. Note that for a given $K$, an instance of the problem in (\ref{eq:Newprob2}) is the matrix $\Hb$ together with $C$. We will prove that (\ref{eq:Newprob2}) is NP-hard via contradiction.
Suppose now that an efficient algorithm (with a complexity polynomial in $K$) exists that can optimally solve (\ref{eq:Newprob2}) for any input $K\times K$ matrix $\Hb$ and any $C:1\leq C\leq K$.
This in turn would imply that there exists an efficient algorithm (with a complexity polynomial in $K$)
that for any input $C:1\leq C\leq K$ and any   $K\times K$ positive definite matrix $\Sigmab$, can determine the  $C\times C$ principal sub-matrix  of $\Sigmab$  having the maximum determinant. 
 Invoking the reduction developed in \cite{ko:algo}, this would then contradict the NP hardness of the problem of determining whether a given input graph has a clique of a given input size.\end{proof}

\section{Proof of Theorem \ref{thmG1}}\label{app:thmG1}
We first consider the complexity of Algorithm I  and note that since the partition matroid constraint needs to be satisfied, there can be at-most $K$ steps in repeat-until loop of the algorithm.
Also, recall that the  the size of the ground set $\Eulc$ is $O(KN^4|\Wc|)$. Then,  at each step we need to compute $h(\Sulc\cup\eul)$ for each $\eul\in\Eulc\setminus\Sulc$ such that $\Sulc\cup\eul$ satisfies all the constraints. Thus, the   worst-case complexity is $O(K^2N^4|\Wc|)$.

Let us now consider the approximation guarantees.
Notice that due to the partition matroid constraint any optimal solution to (\ref{eq:original2}) cannot contain more that $K$ 3-tuples. Then, using the subadditivity of $h(.)$ shown in Lemma \ref{lem:lemH} together with the facts that Algorithm I is monotonic and in its first step selects the 3-tuple of $\Eulc$  having the highest weighted rate, suffice to prove the $\frac{1}{K}$ guarantee. On the other hand, suppose that Conditions \ref{assump1} and \ref{assump2} are satisfied (over all instances). 
Consider the $L$ control channel constraints and  let  $\Eulc_{\ell}$ denote the set of 3-tuples involved in the $\ell^{th}$ control channel constraint so that $\Eulc=\cup_{\ell=1}^L\Eulc_{\ell}$. Recall that $\Eulc_{\ell}\cap \Eulc_{\ell'}=\phi,\;\ell\neq\ell'$ and notice that any set $\Uulc\subseteq\Eulc$ that satisfies these $L$ constraints can be expressed as $\Uulc=\cup_{\ell=1}^L\Uulc_{\ell}$, where $\Uulc_{\ell}\subseteq\Eulc_{\ell}: |\Uulc_{\ell}|\leq b_{\ell},\;1\leq\ell\leq L$, where $b_{\ell}$ is the cardinality bound imposed by the $\ell^{th}$ control channel constraint. Thus the $L$ control channel constraints together are indeed one partition matroid.
More importantly, the intersection of this partition matroid with the one defined in Lemma \ref{lem:Lem1} is also one matroid. To see this, let $\Iulk'$ denote this intersection and recall the definitions given in Appendix \ref{app:defns}. It can readily be seen that $\Iulk'$ is downward closed. Then, we need to show that the exchange property holds. Consider any
 $\Fulc_1,\Fulc_2$ in $\Iulk'$ such that $|\Fulc_1|< |\Fulc_2|$.  Clearly, the users corresponding to all 3-tuples in $\Fulc_1$ must all be distinct since $\Fulc_1\in\Iulk$. In addition, each 3-tuple of $\Fulc_1$ can   have a non-zero coefficient in only one control channel constraint. Similarly for $\Fulc_2$. Then consider any 3-tuple $\eul\in\Fulc_2\setminus\Fulc_1$ such that no 3-tuple in $\Fulc_1$ contains the user $u_{\eul}$. Notice that there must exist at-least one such 3-tuple. Clearly, for such a 3-tuple $\Fulc_1\cup\{\eul\}\in\Iulk$. Consequently, $\Fulc_1\cup\{\eul\}\notin\Iulk'$ only if
  a control channel constraint  is violated. Without loss of generality, suppose this constraint is the first control channel constraint. Then, since all non-zero coefficients in any control channel constraint are identical, we can deduce that there exists a 3-tuple $\eul'\in\Fulc_1$   such that $\eul'\in\Eulc_1$ but the user
   $u_{\eul'}$ is not contained in any 3-tuple of $\Fulc_2$. This observation together with the fact that $|\Fulc_1|<|\Fulc_2|$ allows us to conclude that there exists an $\eul\in\Fulc_2\setminus\Fulc_1$ such that   $\Fulc_1\cup\{\eul\}\in\Iulk'$, which then yields the desired result.

Finally, combining this matroid with the  other $M$ (interference limit) matroid constraints, we see that the feasible subsets belong to the intersection of $M+1$ matroids and hence form a $p-$system where $p=M+1$. Then invoking the guarantee offered by the greedy algorithm on a $p-$system \cite{nemhaus:analysis,nemhaus:algo}, proves the second part.\endproof

\bibliography{SCFDMA_bibU}
\bibliographystyle{ieeetr}
\bibliographystyle{IEEEtran}




\begin{figure}
\centering
\includegraphics[width=0.9\linewidth]
{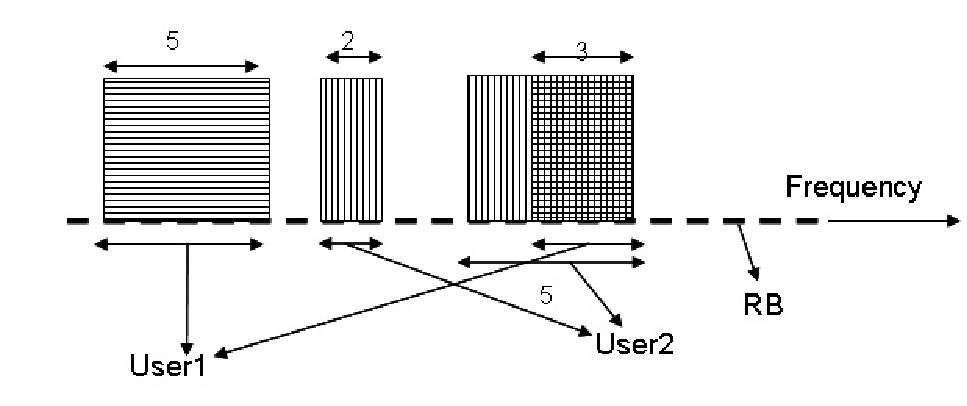} \caption{A Feasible RB Allocation in the LTE-A UL: The assignment of RBs to each user is represented by a shaded region.} \label{fig:lteA}
\end{figure}

\algsetup{indent=1em}
\begin{table}
\caption{{\bf Algorithm I: Greedy Algorithm  for LTE-A UL MU-MIMO}}\label{algo:glteA}
\begin{algorithmic}[1]
\STATE Initialize $\Sulc=\phi$
\STATE \textbf{Repeat}
\STATE Determine
\begin{eqnarray}
  \hat{\eul}=\arg\max_{\eul\in\Eulc\setminus\Sulc\atop \Sulc\cup\eul\in\Iulk;\Ab_I\xb_{\Sulc\cup\eul}\leq {\bf{1}}_M;\Ab_C\xb_{\Sulc\cup\eul}\leq {\bb}}\left\{h(\Sulc\cup\eul)\right\}
 \end{eqnarray}
  and set $\hat{v}=h(\Sulc\cup\hat{\eul})-h(\Sulc)$.
 \STATE \textbf{If} $\hat{v}>0$ \textbf{Then}
 \STATE $\Sulc\leftarrow\Sulc\cup\eul$
 \STATE \textbf{End If}
\STATE   \textbf{Until} $\hat{v}\leq 0$ or $\hat{\eul}=\phi$
\STATE Output    $\Sulc$.
\end{algorithmic}
\end{table}

\begin{table}
\begin{footnotesize}
\begin{center}
\hspace{-1cm}\caption[]{Symbol Definitions}\label{tab:defnsym}
\begin{tabular}{|c|p{4cm}|p{4cm}|p{4cm}|}\hline
   $K$ & Number of users &  $N$ & Number of RBs \\ \hline  $N_t$ & Number of TX antennas at each user &
   $N_r$ & Number of RX antennas at BS \\ \hline  $L$ & Number of column sparse knapsack constraints that model the control channel overhead constraints  & $\Ab_C\in\{0,1\}^{L\times|\Eulc|}$ & Matrix containing the coefficients of the column-sparse knapsack constraints   \\ \hline
   $\Delta$ & column sparsity level in $\Ab_C$ & $\bb$ & $L$ length control channel budget vector       \\  \hline $M$ & number of generic knapsack constraints &     $\Ab_I\in [0,1]^{M\times|\Eulc|}$ & Matrix containing the normalized coefficients of the generic knapsack constraints   \\\hline
 $\alpha_u$ & Weight of  user $u$   & $r_u$ & rate (bits/frame) assigned to user $u$ \\ \hline   $P_u$ & Power budget of user $u$ & $Q_u$ & Buffer size of user $u$ \\\hline
     $\cb$ & $N$-length vector representing a valid RB assignment containing at-most two chunks &   $\Wb$  & Precoder matrix having unit Frobenius norm \\ \hline  $\Wc$ & Finite codebook of all precoder matrices &
     $\Ck$ & Set of all valid RB assignments \\ \hline  $\eul=(u, \cb,\Wb)$ & 3-tuple denoting allocation of RB assignment $\cb$ and precoder $\Wb$ to user $u$ & $\Eulc$ & Ground set containing all possible 3-tuples \\ \hline  $u_{\eul}$ & user in 3-tuple $\eul$ &  $\cb_{\eul}$ & RB assignment in 3-tuple $\eul$ \\ \hline  $\Wb_{\eul}$ & precoder in 3-tuple $\eul$ &
     $\Iulk$ & Collection of valid subsets of $\Eulc$ \\ \hline  $\Hb_{\eul}^{(n)}$ & Channel matrix seen from user $u_{\eul}$ on RB $n$  & $\Bulk(\Uulc)$ & Region defined by buffer sizes of 3-tuples in $\Uulc$         \\  \hline
          $f(.),f'(.),g(.),g'(.)$ & four different rank functions & $\Pulk(\Uulc,f),\Pulk(\Uulc,f')$, $\Tulk(\Uulc,g),\Tulk(\Uulc,g')$  & Polymatroids determined by subset $\Uulc\subseteq\Eulc$ and rank functions $f(.),f'(.),g(.),g'(.)$, respectively \\  \hline $h(.)$ &  Set function defined such that $h(\Uulc),\forall\;\Uulc\subseteq\Eulc$ yields the maximum weighted sum rate over polymatroid $\Pulk(\Uulc,f')$ & $h'(.)$ &  Set function defined such that $h'(\Uulc),\forall\;\Uulc\subseteq\Eulc$ yields the maximum weighted sum rate over polymatroid $\Tulk(\Uulc,g')$\\ \hline 
     \end{tabular}
\end{center}
\end{footnotesize}
\end{table}

%

\begin{figure}[htb!]
\begin{center}
\centerline{\epsfig{file=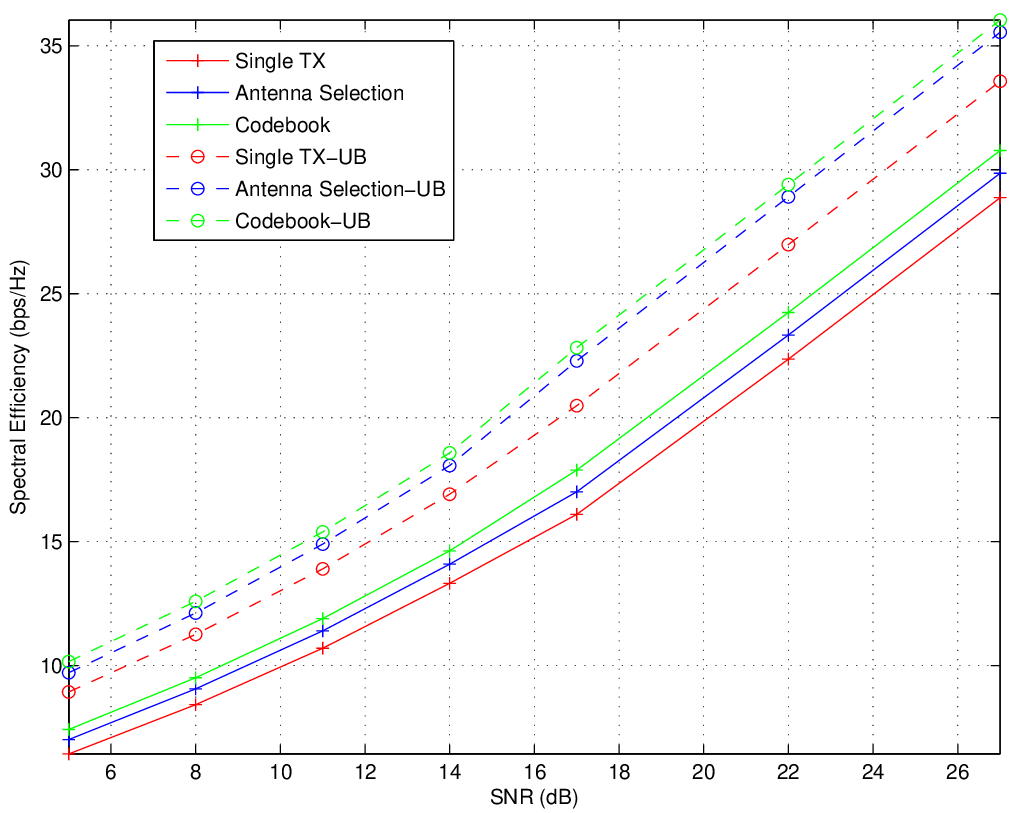,width=.6\linewidth}}
\caption{Average spectral efficiency versus SNR (dB).}
\label{fig_plot2}
\end{center}
\end{figure}

\begin{figure}[htb!]
\begin{center}
\centerline{\epsfig{file=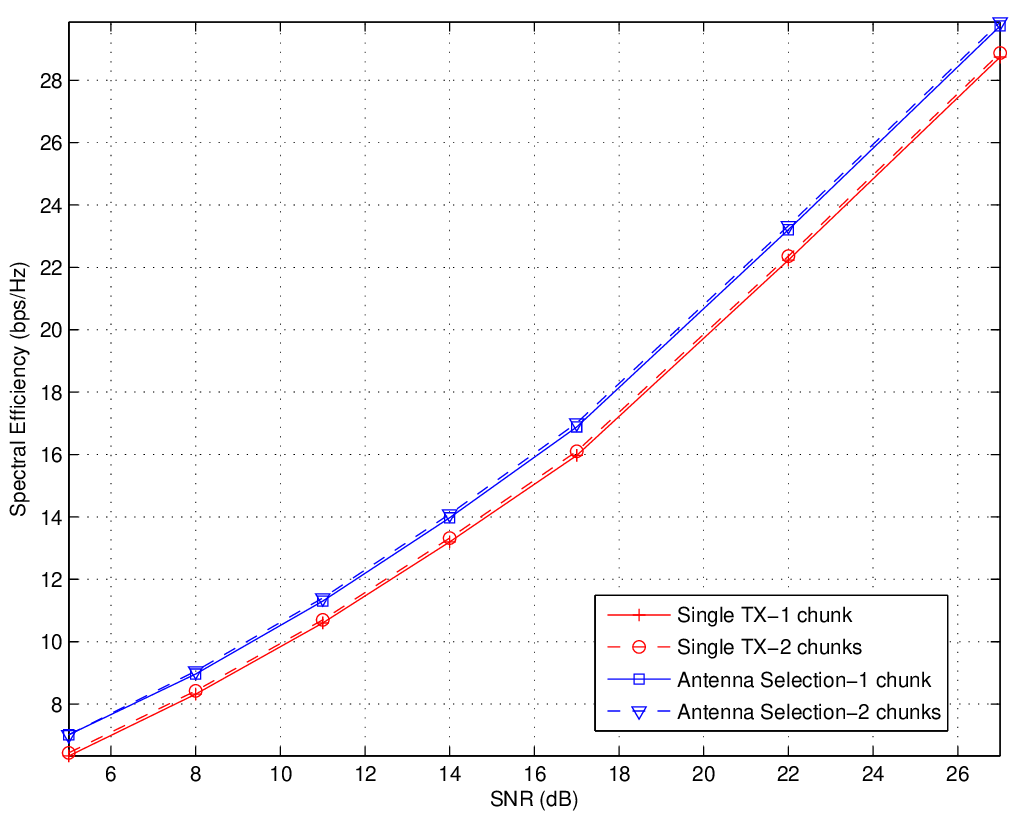,width=.6\linewidth}}
\caption{Impact of the number of chunks per user.}
\label{fig_plot4}
\end{center}
\end{figure}

\begin{figure}[htb!]
\begin{center}
\centerline{\epsfig{file=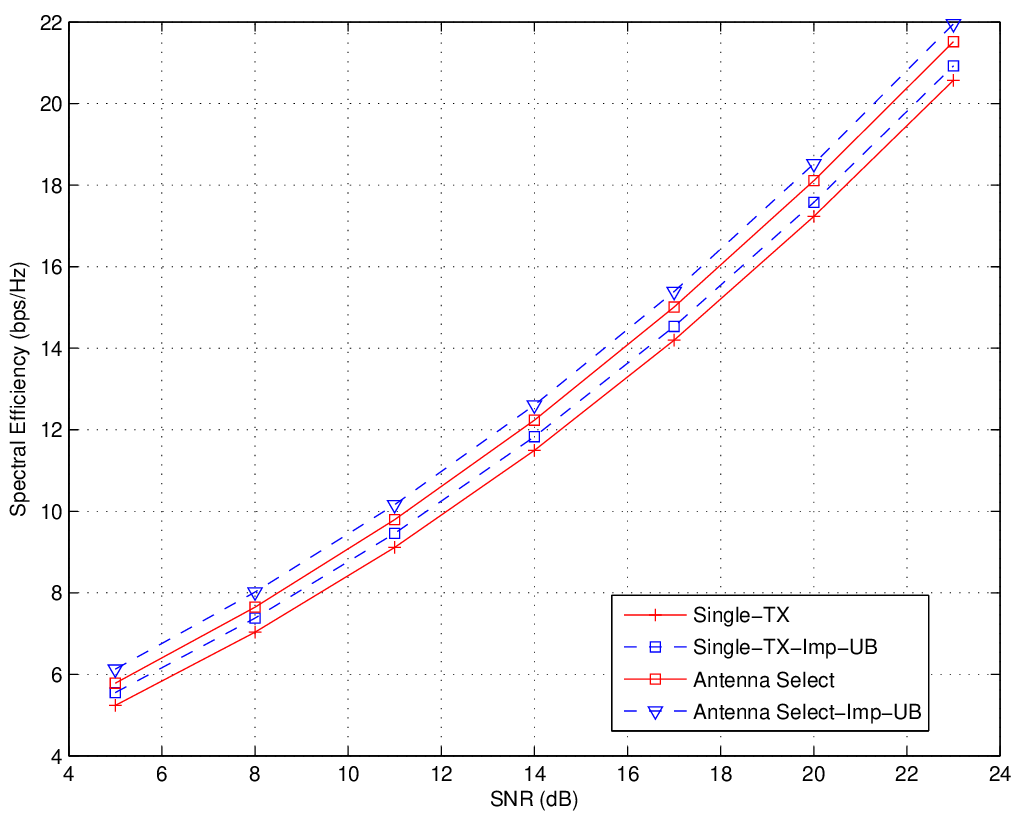,width=.6\linewidth}}
\caption{Convex Optimization based Upper Bound}
\label{fig_newUB}
\end{center}
\end{figure}

\begin{figure}[htb!]
\begin{center}
\centerline{\epsfig{file=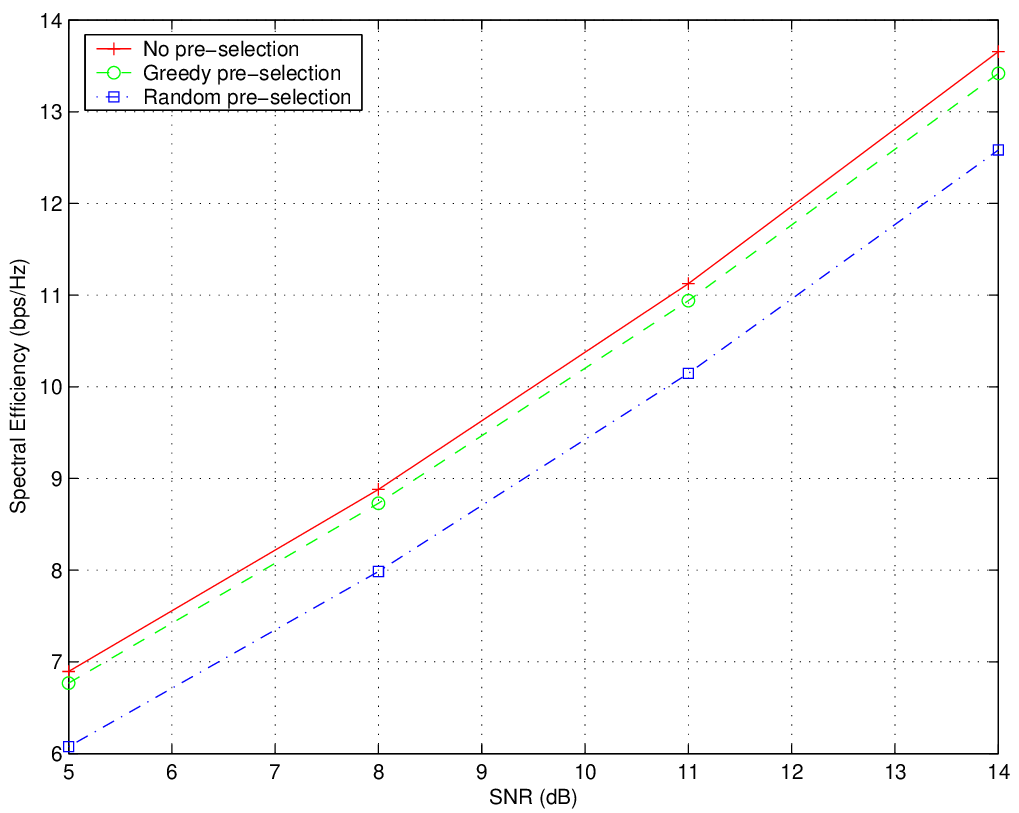,width=.6\linewidth}}
\caption{Impact of User Pre-Selection.}
\label{fig_plot3}
\end{center}
\end{figure}

\begin{figure}[htb!]
\begin{center}
\centerline{\epsfig{file=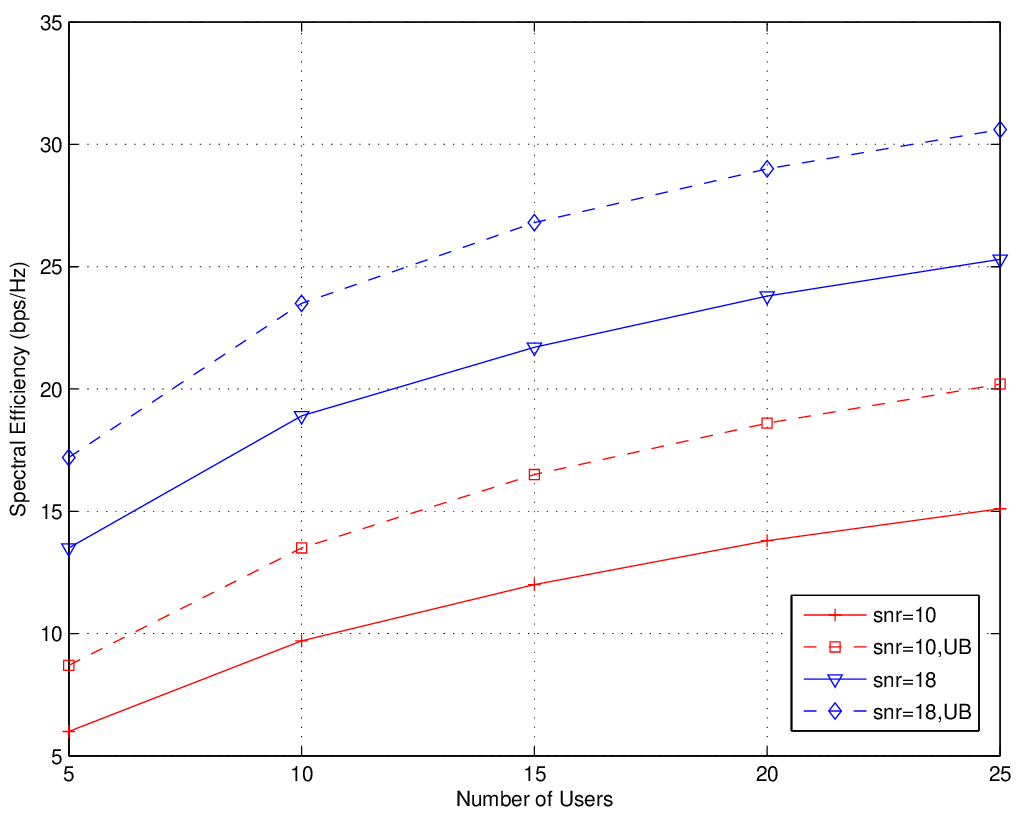,width=.6\linewidth}}
\centerline{\small (a)}
\centerline{\epsfig{file=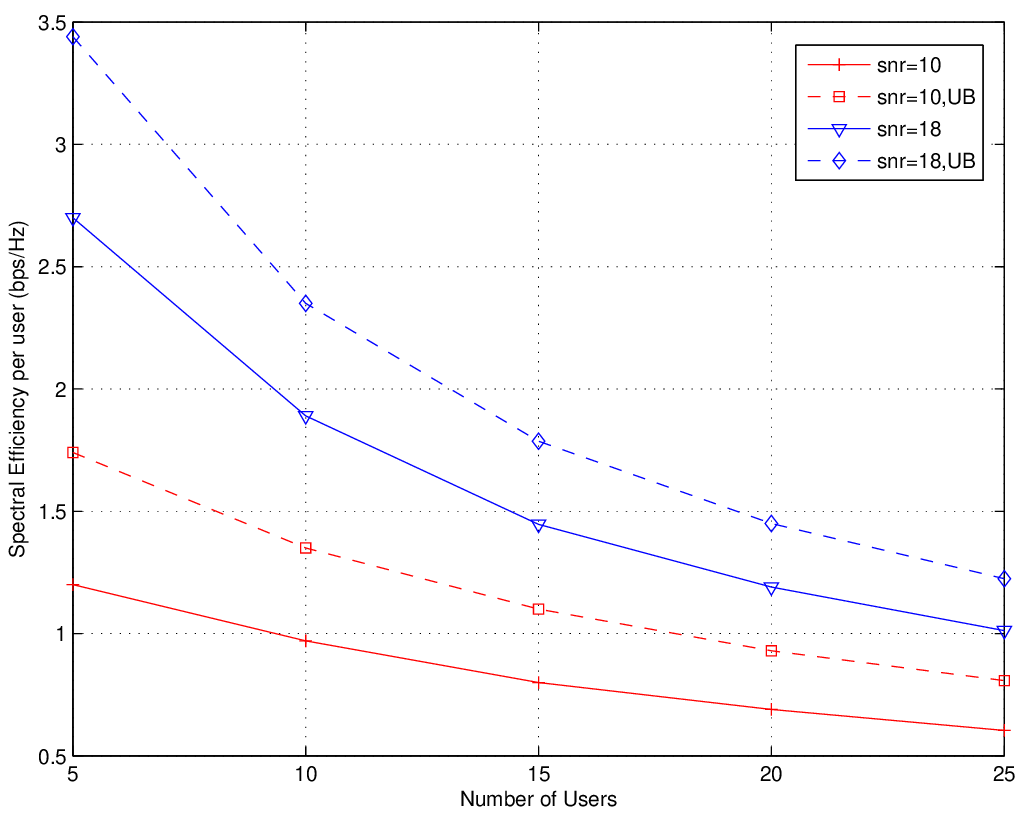,width=.6\linewidth}}
\centerline{\small (b)}
\caption{Impact of the number of users: (a) Cell spectral efficiency (b) per-user spectral efficiency.}
\label{fig_plotincu}
\end{center}
\end{figure}

\begin{figure}[htb!]
\begin{center}
\centerline{\epsfig{file=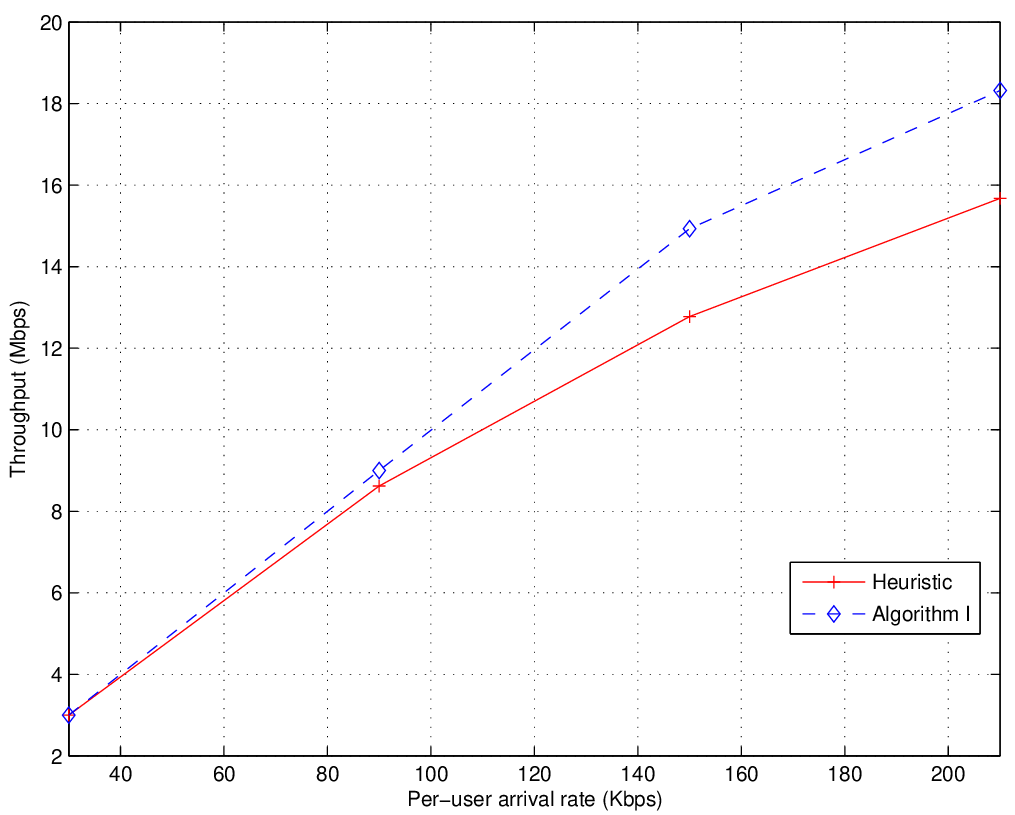,width=.6\linewidth}}
\caption{Impact of finite buffers.}
\label{fig_plotfb}
\end{center}
\end{figure}

\begin{figure}[htb!]
\begin{center}
\centerline{\epsfig{file=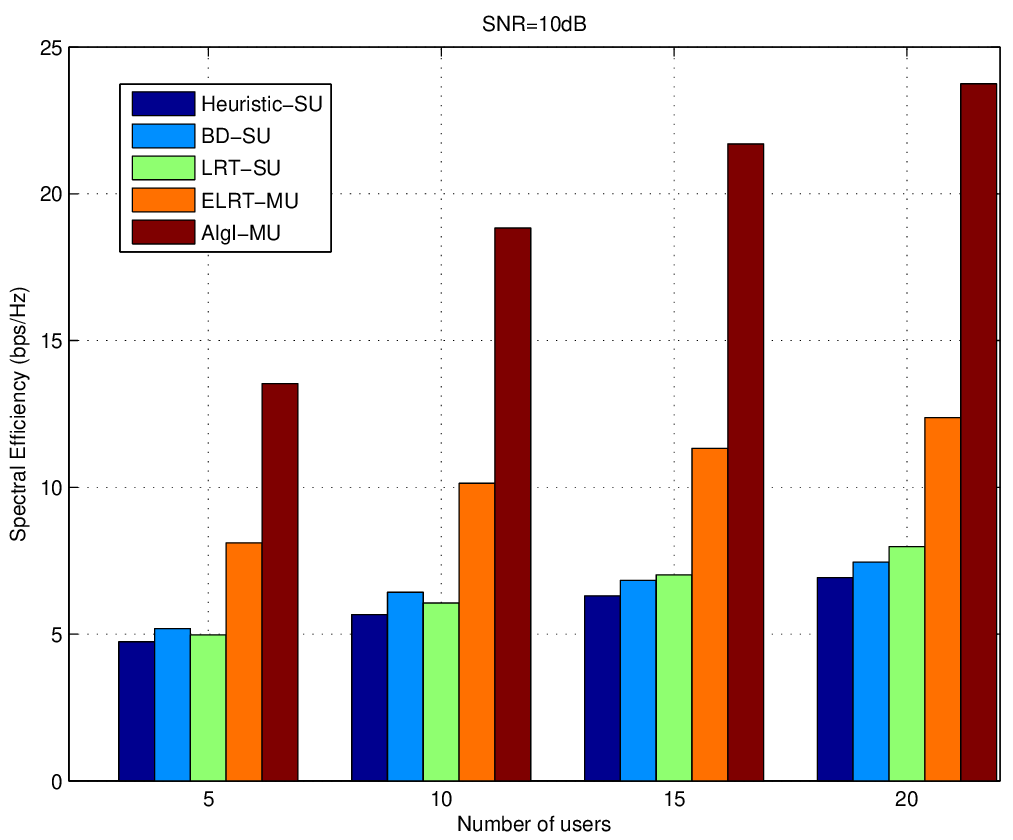,width=.6\linewidth}}
\centerline{\small (a)}
\centerline{\epsfig{file=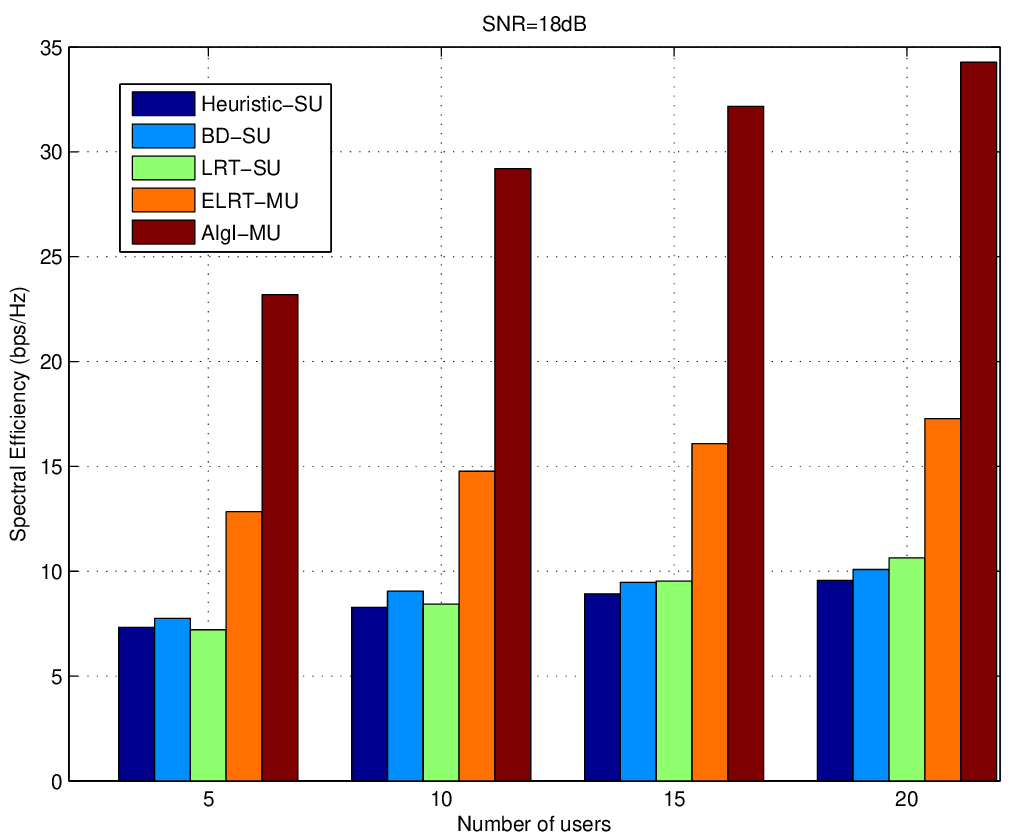,width=.6\linewidth}}
\centerline{\small (b)}
\caption{Comparison for varying number of users: (a) Average SNR=10dB (b) Average SNR=18dB .}
\label{fig_plotcompnew}
\end{center}
\end{figure}

%

%
%
%
%

\end{document}